\pgfplotsset{compat=1.8}
\newcommand{\punt}[1]{}
\newtheorem{problem}{Problem}
\newtheorem{question}{Question}
\def\@copyrightspace{\relax}
\newcommand{\defn}[1]       {{\textit{\textbf{\boldmath #1}}}}
\newcommand{\myparagraph}[1]{\vspace{.03in}\noindent {\textbf{\textit{#1}}}}
\newcommand{\polylog}{\mathrm{polylog}}
\newcommand{\prob}[1]{ \Pr \left [ #1 \right ]}
\renewcommand{\subparagraph}[1]{\smallskip
\noindent
\emph{#1 }}
\newcommand{\etal}{\text{et al}.\xspace}
\date{}
\newcommand{\namedcomment}[3]{{\sf \color{#2} #1: #3}}
\newcommand{\mab}[1]{\namedcomment{mab}{red}{#1}}
\newcommand{\mfc}[1]{\namedcomment{mfc}{purple}{#1}}
\newcommand{\david}[1]{\namedcomment{david}{red}{#1}}
\newcommand{\daniel}[1]{\namedcomment{daniel}{red}{#1}}
\newcommand{\ahmed}[1]{\namedcomment{ahmed}{blue}{#1}}
\newcommand{\victor}[1]{\namedcomment{victor}{green}{#1}}
\newcommand{\evan}[1]{\namedcomment{evan}{orange}{#1}}
\newcommand{\abi}[1]{\namedcomment{abi}{cyan}{#1}}
\renewcommand{\daniel}[1]{\todo[size=\tiny,color=cyan!40]{Dan: #1}}
\renewcommand{\mab}[1]{\todo[size=\tiny,color=green!40]{MAB: #1}}
\renewcommand{\mfc}[1]{\todo[size=\tiny,color=green!40]{MFC: #1}}
\renewcommand{\david}[1]{\todo[size=\tiny]{David: #1}}
\renewcommand{\ahmed}[1]{\todo[size=\tiny,color=yellow]{Ahmed: #1}}
\renewcommand{\victor}[1]{\todo[size=\tiny,color=yellow]{Victor: #1}}
\renewcommand{\evan}[1]{\todo[size=\tiny,color=red!40]{Evan: #1}}
\renewcommand{\abi}[1]{\todo[size=\tiny,color=red!40]{Abi: #1}}
\newcommand{\qnote}[1]{\todo[size=\tiny,color=red!40]{Quinten: #1}}
\newcommand{\fixme}[1]{\todo[size=\tiny]{#1}}
\newcommand{\inline}[1]{\todo[inline,color=yellow,size=\tiny]{#1}}
\renewcommand{\epsilon}{\varepsilon}
\renewcommand{\eqref}[1]          {Eq.~\ref{eq:#1}}
\definecolor{bg}{rgb}{0.95,0.95,0.95}
\algrenewcommand{\algorithmicindent}{1em}
\newcommand{\revision}[1]{#1}
\newcommand{\revise}[1]{#1}
\newcommand{\locked}{}
\newcommand{\graph}{\mathcal{G}}
\newcommand{\nodes}{\mathcal{V}}
\newcommand{\edges}{\mathcal{E}}
\newcommand{\tree}{\mathcal{T}}
\newcommand{\forest}{\mathcal{F}}
\newcommand{\component}{\mathcal{C}}
\newcommand{\nodesize}{V}
\newcommand{\edgesize}{E}
\newcommand{\graphstream}{S}
\newcommand{\edge}{e}
\newcommand{\nodesubset}{\mathcal{U}}
\newcommand{\lsketch}{s}
\newcommand{\myvec}{x}
\newcommand{\myothervec}{y}
\newcommand{\charvec}{f}
\newcommand{\hashfamily}{\mathcal{H}}
\newcommand{\losslessname}{lossless\xspace}
\newcommand{\Losslessname}{Lossless\xspace}
\newcommand{\sketchdcname}{sketch-based\xspace}
\newcommand{\Sketchdcname}{Sketch-Based\xspace}
\newcommand{\streamingname}{streaming connectivity\xspace}
\newcommand{\Streamingname}{Streaming Connectivity\xspace}
\newcommand{\cutset}{F}
\newcommand{\cutsetprob}{p}
\newcommand{\link}{\mathsf{Link}}
\newcommand{\cut}{\mathsf{Cut}}
\newcommand{\connected}{\mathsf{Connected}}
\newcommand{\update}{\mathsf{Update}}
\newcommand{\query}{\mathsf{Query}}
\newcommand{\pathquery}{\mathsf{PathQuery}}
\newcommand{\subtreequery}{\mathsf{SubtreeQuery}}
\newcommand{\queryperiod}{{\rho}}
\newcommand{\treeedge}{tree-edge\xspace}
\newcommand{\nontreeedge}{non-tree-edge\xspace}
\newcommand{\agmname}{AGM sketching\xspace}
\newcommand{\Agmname}{AGM Sketching\xspace}
\newcommand{\cameosketch}{\textsc{CameoSketch}\xspace}
\newcommand{\gibb}{Gibb's algorithm\xspace}
\newcommand{\Gibb}{Gibb's Algorithm\xspace}
\newcommand{\sysname}{CUPCaKE\xspace}
\newcommand{\sysacronym}{CUPCaKE\xspace}
\newcommand{\fullsysname}{\defn{\sysname} (\textbf{C}ompact \textbf{U}pdating \textbf{P}arallel \textbf{C}onnectivity \textbf{a}nd S\textbf{k}etching \textbf{E}ngine)\xspace}
\newcommand{\cdcname}{Concurrent-DC\xspace}
\newcommand{\cdcacronym}{CDC\xspace}
\newcommand{\dtree}{D-Tree\xspace}
\newcommand{\idtree}{ID-Tree\xspace}
\newcommand{\speculative}{speculative\xspace}
\newcommand{\speculatively}{speculatively\xspace}
\newcommand{\Speculative}{Speculative\xspace}
\newcommand{\geo}{\mathbb{G}}
\newcommand{\domain}{\mathcal{D}}
\begin{document}


\title{Fast and Compact \Sketchdcname Dynamic Connectivity}

\author{
Quinten De Man
\thanks{University of Maryland
(\email{deman@umd.edu}).}
\and Qamber Jafri
\thanks{Stony Brook University
(\email{qjafri28@gmail.com}).}
\and Daniel DeLayo
\thanks{Stony Brook University
(\email{ddelayo@cs.stonybrook.edu}).}
\and Evan T. West
\thanks{Stony Brook University
(\email{etwest@cs.stonybrook.edu}).}
\and David Tench
\thanks{Lawrence Berkeley National Lab
(\email{dtench@pm.me}).}
\and Michael A. Bender
\thanks{Stony Brook University
(\email{bender@cs.stonybrook.edu}).}
}

\date{}
\maketitle
\begin{abstract}

We study the dynamic connectivity problem for massive, dense graphs.
Our goal is to build a system for dense graphs that simultaneously answers connectivity queries quickly, maintains a fast update throughput, and a uses a small amount of memory. Existing systems at best achieve two of these three performance goals at once.

We present a parallel dynamic connectivity algorithm using graph sketching techniques that has space complexity $O(\nodesize \log^3 \nodesize)$ and query complexity $O(\log\nodesize/\log\log\nodesize)$.
Its updates are fast and parallel: in the worst case, it performs updates in $O(\log^2 \nodesize)$ depth and $O(\log^4 \nodesize)$ work. 
For updates which don't change the spanning forests maintained by our data structure, the update complexity is $O(\log \nodesize)$ depth and $O(\log^2 \nodesize)$ work.

We also present CUPCaKE (Compact Updating
Parallel Connectivity and Sketching Engine), a dynamic connectivity system based on our parallel algorithm.
It uses an order of magnitude less memory than the best lossless systems on dense graph inputs, answers queries with microsecond latency, and ingests millions of updates per second on dense graphs.

\end{abstract}

\sloppy
\clearpage

\section{Introduction}

The dynamic connected components problem is frequently encountered in a variety of application areas. The task is to compute the connected components of a graph subject to a sequence of edge insertions and deletions. This problem is motivated by applications such as tracking communities in social networks that change as users add or delete friends~\cite{dynamic_social,dynamic_web}, and identifying objects from a video feed rather than a static image~\cite{movingobject}. It is an important computational primitive for large-scale numerical analysis in computational geometry~\cite{doraiswamy2009efficient}, chemistry~\cite{eyal2005improved}, and biology~\cite{henzinger1999constructing}. It is frequently used as a subroutine in other graph algorithms (both static and dynamic), including dynamic minimum spanning forest and biconnectivity~\cite{conn_apps}. It is used as a heuristic in pathfinding algorithms (such as variants of Djikstra and A*) as well as for some clustering methods~\cite{zhang2020parallel,GeorganasEHGATBOY2018,nurk2017metaspades,azad2018hipmcl,van2000graph}.
 
While the details vary, all dynamic connectivity algorithms in the theory literature share two key features: they all construct a spanning forest of the graph to which new edges can be greedily added, and they all have a mechanism for replacing spanning forest edges that get deleted.
The task of replacing deleted edges is typically the most complicated and expensive part of these algorithms, and we categorize the known algorithms into two groups based on how they replace edges. \defn{\Losslessname algorithms}, starting with Henzinger and King~\cite{henzinger1995randomized}, store the current set of all edges in the graph losslessly, and must search this set to find edge replacements. On the other hand, \defn{\sketchdcname algorithms}, starting with Ahn \etal~\cite{Ahn2012}, compress the current set of all edges in the graph lossily into a data structure called a \defn{graph sketch}, which is asymptotically smaller than the input graph. The algorithm must perform a recovery procedure to sample replacement edges from the graph sketch.



\myparagraph{Dynamic connectivity challenge: What happens when the graphs are dense?}
In this paper we focus on solving dynamic connectivity specifically on \defn{dense} graphs. These graphs are challenging to process because their lossless representations are generally too large to fit in main memory. Before we explore solutions to this challenge, we first motivate the task of processing dense graphs.


\myparagraph{While most graphs are sparse, dense graphs do appear in important applications.}
The conventional wisdom is that massive graphs are always sparse, meaning that they have few edges per vertex.
Tench \etal~\cite{graphzeppelin, landscape} contend that instead large, dense graphs do not appear in academic publications due to a selection effect: since we lack the tools to process these graphs, they are not studied. Indeed, they point to reports of proprietary large, dense graphs used by companies like Facebook~\cite{facebookdense}, but these graphs are not openly available for public study. One such graph has 40 million nodes and 360 billion edges. In this paper we focus on methods for solving dynamic connectivity on dense graphs.

The practitioner who wants to solve dynamic connectivity on a dense graph faces a tradeoff---space versus query time. Lossless systems have massive space requirements on dense graphs but offer microsecond query latency. Graph sketching-based systems use small space on dense graphs but offer human-scale (single-digit second) query latency. 


\myparagraph{\Losslessname systems: huge space requirements but fast queries.} The best \losslessname algorithms~\cite{conn_apps,henzinger1997sampling,holm2001poly,wulff2013faster,thorup2000near,huang2023fully} have space complexity $O(\nodesize + \edgesize)$, worst-case query time complexity $O(\log \nodesize/\log\log\nodesize)$, and amortized update time complexity $o(\log^2 \nodesize)$ for graph $\graph = (\nodes, \edges), \nodesize = |\nodes|, \edgesize = |\edges|$.
Recent work has implemented modified versions of several \losslessname algorithms~\cite{fedorov2021scalable,chen2022dynamic, deman2025towards, xu2024constant, chen2025experimental}, and these implementations achieve high update throughput (hundreds of thousands to millions of updates per second) and low (microsecond) query latency. However, the linear dependency on $\edgesize$ in their space cost means that they cannot process large, dense graphs because the edge list is too large.

\myparagraph{Graph sketching systems: small space but slow queries.}
Connected components sketch algorithms~\cite{Ahn2012,graphzeppelin,landscape,bender2025case,gibb2015dynamic} have space complexity $O(\nodesize\log^3\nodesize)$ bits which is completely independent of $\edgesize$. As a result, sketching saves a lot of space when the input graph is dense, but does not save much space (and in fact may be larger than an adjacency list of $\graph$) when the graph is very sparse. Recent work has implemented connectivity sketching algorithms in the single-machine~\cite{tench2024graphzeppelin} and distributed~\cite{landscape} settings. These implementations are compact 
and have high update throughput 
on dense graphs. However, connected component queries are slow: the query complexity is $O(\nodesize\log^2\nodesize)$, and in practice this translates into human-scale (single-digit second) query latency.


Our goal is to have our cake and eat it too: that is, to build a system that uses small space on dense graphs, but also has fast queries.

Our starting point is the dynamic connectivity algorithm of Gibb \etal~\cite{gibb2015dynamic}, which has space complexity $O(\nodesize \log^3 \nodesize)$, update time complexity $O(\log^4\nodesize)$, and query complexity $O(\log\nodesize/\log\log\nodesize)$. Prior to this work, Gibb's algorithm had not been implemented to our knowledge. Implementing Gibb's algorithm alongside the sketching techniques of Tench \etal~\cite{graphzeppelin,landscape} immediately gives space almost as small as the best sketching systems, as well as microsecond-latency queries. Thus, for space and queries, an implementation of Gibb's is enough to declare victory. However, this implementation only achieved update throughput in the tens of thousands. There were two major obstacles we had to overcome to build a faster system based on Gibb's techniques: too many log factors in the update cost, and sequential bottlenecks.




For the first obstacle, Gibb's algorithm has update time $O(\log^4\nodesize)$, and the big $O$ notation hides a large constant. For comparison, state-of-the-art lossless dynamic connectivity systems are based on algorithms whose amortized update time is $O(\log^2\nodesize)$ or less. For a graph on a million nodes, this means that Gibb's could be roughly three orders of magnitude slower than the state-of-the-art lossless systems. 

For the second obstacle, graph sketching algorithms with much lower update costs than Gibb's algorithm require lots of parallelism to be fast in practice. Gibb's algorithm has lots of sequential data dependencies, so an implementation gains limited performance from parallelism. As a result, even though a system built on Gibb's algorithm has small space and fast queries, updates are three orders of magnitude slower than the fastest lossless systems.

Whereas before we described a two-way tradeoff between space and queries, in fact there is a three-way tradeoff between space, queries, and updates. Lossless systems have fast queries and updates but use lots of space. Sketching systems have fast updates and small space but slow queries. And an implementation of Gibb's algorithm has small space and fast queries but slow updates.

In this paper, we show how to get all three---small space, fast queries,  and fast updates---in one system.

\subsection{Results}

Our first result is a parallel dynamic connectivity algorithm using graph sketching which has space complexity $O(\nodesize \log^3 \nodesize)$ and query complexity $O(\log\nodesize/\log\log\nodesize)$.
Its updates are fast and parallel: in the worst case, it performs updates in $O(\log^2 \nodesize)$ depth and $O(\log^4 \nodesize)$ work. For a common class of updates called \defn{non-isolated}, which require minimal structural change in our data structure, the algorithm performs updates in only $O(\log \nodesize)$ depth and $O(\log^2 \nodesize)$ work. 

Our second result is \fullsysname\footnote{Our code can be found here: \url{https://anonymous.4open.science/r/DynamicQueriesSketch-9BC1/README.md}}, our dynamic connectivity system based on our parallel algorithm. \sysname takes in a stream of edge insertions and deletions that define a graph and supports connectivity queries at any point during the stream. It is compact: it uses an order of magnitude less memory than the best lossless systems on dense graph inputs. It answers queries quickly: on a variety of datasets, queries have microsecond latency. It has reasonably fast updates: it ingests updates up to two orders of magnitude faster than Gibb's algorithm. On dense graphs, \sysname is 1.5-8 times faster than nearly all state-of-the-art lossless dynamic connectivity systems~\cite{fedorov2021scalable, chen2022dynamic}, with the exception of the current fastest system~\cite{xu2024constant}, which is about ten times faster than it.

\myparagraph{Our Algorithm Has Low Costs in the Worst-Case and in the Common-Case.}
Our parallel algorithm achieves $O(\log^2 \nodesize)$ worst-case depth in the binary fork-join model~\cite{binaryforking}.
This is the lowest worst-case update depth of any known dynamic connectivity algorithm, improving on the previous best known result of $O(\log^3 \nodesize)$ worst-case depth~\cite{acar2019parallel,deman2025towards}.

It also performs asymptotically better on a common class of updates that we call \defn{non-isolated} (defined in Section~\ref{sec:algorithm}). Roughly, non-isolated updates are ones that do not change the spanning forests maintained by our data structure. Experimentally, we find that nearly all updates ($93\% - 99.994\%$, depending on the dataset) are non-isolated.

The improved parallelism (lower depth) and low asymptotic work for updates allows our algorithm to overcome the two obstacles that prevent an implementation of \gibb from having fast updates: sequential data dependencies and high average work per update.

\myparagraph{\sysname: High-Performance Dynamic Connectivity Implementation}
\sysname is based on our parallel dynamic connectivity algorithm. We now summarize three design choices and heuristics that allow \sysname to achieve small space, fast queries, and fast updates in experiments.
First, \sysname adopts a \defn{partitioned memory design} in order to minimize communication in the easy case (non-isolated updates) and decreases cache contention in the hard case (isolated updates).
Second, \sysname uses a heuristic we call \defn{\speculative update buffering} that takes advantage of non-isolated update's greatly reduced data-dependency to process many of them in parallel. 
%
Third, \sysname's dynamic tree data structures are built on what we call a \defn{reduced height skip list} which drastically decreases the work of non-isolated updates and significantly increases query speed.

\section{Preliminaries}
\subsection{Background \locked}
\label{subsec:prelimsummary}
This paper assumes the reader has some familiarity with dynamic connectivity algorithms and graph sketching algorithms. We provide a brief summary of several important concepts here; for a more detailed treatment on each of the topics summarized here, we refer the curious reader to Appendix~\ref{app:prelims}.

We consider the dynamic connectivity problem, where the task is to answer connectivity queries on a graph graph $\graph = (\nodes, \edges)$ defined by a series of edge insertions and deletions. Queries have form $\connected(u,v)$ and should return YES if there is a path between vertices $u,v \in \nodes$ and NO otherwise. Queries arrive interleaved with updates and must be answered before processing more updates. The goal is to minimize update time, query time, and space cost.

Variants of this problem have been studied in different algorithmic settings. \Losslessname graph algorithms~\cite{conn_apps,henzinger1997sampling,holm2001poly,wulff2013faster,thorup2000near,huang2023fully} and systems~\cite{fedorov2021scalable,chen2022dynamic, deman2025towards, xu2024constant, chen2025experimental} focus on maximizing update throughput and minimizing query latency. Graph sketching algorithms~\cite{Ahn2012,bender2025case} and systems~\cite{graphzeppelin,landscape} focus on minimizing space cost as a first-order concern. 

Graph sketching algorithms minimize space by representing parts of the input graph $\graph$ in a lossily compressed form. In this work we make use of an algorithmic primitive used in most graph sketching algorithms called an \defn{$\ell_0$ sketch}~\cite{cormode2014unifying,Ahn2012}. $\ell_0(u)$, the $\ell_0$ sketch of $u \in \nodes$, is a data structure of size $O(\log^2 \nodesize)$ bits which lossily represents the neighborhood of $u$. An insertion or deletion of edge $(u,v)$ can be applied to $\ell_0(u)$ in constant time~\cite{landscape}. At any time, querying $\ell_0(u)$ gives an edge incident to $u$ with constant probability, and otherwise fails to return an edge. Addition is defined over $\ell_0$ sketches; for $S \subset \nodes$, we denote $\sum_{u\in S} \ell_0(u)$ as $\ell_0(S)$. Querying $\ell_0(S)$ gives an edge $(w,y)$ s.t. $w \in S$ and $y \in \nodes \setminus S$ with constant probability, and otherwise fails to return an edge. 

\subsection{\Sketchdcname Dynamic Connectivity \locked}\label{sec:hybrid_solutions}
More recently, a new class of dynamic connectivity algorithms has emerged which combines both the polylogarithmic update and query times of \losslessname algorithms and the small space usage of \streamingname algorithms~\cite{kapron2013dynamic, gibb2015dynamic}. We call these \defn{\sketchdcname dynamic connectivity algorithms}.
The new algorithms in this paper share several characteristics of the \sketchdcname dynamic connectivity algorithm of Gibb~\etal~\cite{gibb2015dynamic}, so we summarize it here.
The algorithm uses $O(\nodesize \log^3 \nodesize)$ bits of space, processes updates in $O(\log^4 \nodesize)$ worst-case time, and answers queries in $O(\log \nodesize / \log \log \nodesize)$ time. Each query is correct w.h.p.
We refer to their algorithm as \gibb.

%

\subsubsection{Cutset Data Structure \locked}
The main components in \gibb are \defn{cutset} data structures~\cite{kapron2013dynamic}. A cutset for a graph $\graph = (\nodes, \edges)$ and a forest of the graph $\forest = (\nodes, \edges_{\forest} \subseteq \edges)$ is able to find an edge crossing the cut of components in the graph quickly, even as edges are inserted to or deleted from $\graph$ and components in $\forest$ are linked or cut.
Let $\component(w)$ denote the component of $\forest$ containing a vertex $w$. Formally, a cutset supports the following operations:

\begin{itemize}[topsep=0pt,itemsep=0pt,parsep=0pt,leftmargin=15pt]
    \item $\link(u,v)$: Given vertices $u$ and $v$ that are not connected in $\forest$, insert the edge $(u,v)$ into $\forest$, linking $\component(u)$ and $\component(v)$.
    \item $\cut(u,v)$: Given vertices $u$ and $v$ where $(u,v)$ is an edge in $\edges_{\forest}$, delete the edge, cutting $\component(u,v)$ into $\component(u)$ and $\component(v)$.
    \item $\update(e)$: If edge $e \notin \edges$, insert $e$ into $\edges$, else delete $e$ from $\edges$.
    \item $\query(v)$: Return an edge crossing the cut of $\component(v)$. This is successful with at least some constant probability $0 < \cutsetprob < 1$.
\end{itemize}
The cutset data structure can be implemented as a \defn{dynamic tree} data structure, which maintains a forest that represents the components of the graph subject to edge insertions and deletions ($\link$ and $\cut$ operations) on a fixed vertex set.
The dynamic tree data structure maintains a value associated with each vertex and can return the value of an aggregate function $g$ applied over the values for all the vertices in a target component (see Section~\ref{app:dynamic_trees} for details).
To implement the cutset, the dynamic tree maintains $\ell_0$-sketch $\ell_0(v)$ with success probability $\delta = 1 - \cutsetprob$ as the value for each vertex $v \in \nodes$ (see Section~\ref{sec:streaming_solutions} for more details).
Specifically, the cutset algorithm uses the $\ell_0$-sketch algorithm of Cormode~\etal~\cite{cormode2014unifying}.
The aggregate function $g$ is simply the addition function for two $\ell_0$-sketches. 

Performing a $\link(u,v)$ or $\cut(u,v)$ operation in the cutset corresponds exactly to the same operation in the dynamic tree data structure.
An $\update(e = (u,v))$ operation first performs sketch updates to $\ell_0(u)$ and $\ell_0(v)$. The sketch update must also propagate to $O(\log \nodesize)$ internal nodes in the dynamic tree (see Appendix~\ref{app:dynamic_trees}).
A $\query(v)$ operation entails performing a $\subtreequery(v)$ operation in the dynamic tree data structure to get the aggregate value for the component containing $v$, which we call $\component(v)$. This aggregate value is the sketch $\sum_{v \in \component} \ell_0(v)$.
Thus, the aggregate sketch supplied by the tree returns an edge crossing the cut $(\component(v), \nodes \setminus \component(v))$ with probability at least $p$.
%
Lemma~\ref{lem:cutset} analyzes the cutset data structure:
\begin{lemma}[Adapted from~\cite{gibb2015dynamic}, Lemma 2.1] \label{lem:cutset}
    For a graph $\graph = (\nodes, \edges)$ and a forest $\forest = (\nodes, \edges_{\forest} \subseteq \edges)$ there exists a cutset data structure with constant query success probability $p$ using $O(\nodesize \log^2\nodesize)$ space, performing $\link$ and $\cut$ operations in $O(\log^2 \nodesize)$ time, $\update$ operations in $O(\log^2\nodesize)$ time, and $\query$ operations in $O(\log \nodesize)$ time.
\end{lemma}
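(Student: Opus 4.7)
The plan is to instantiate the cutset as a dynamic tree whose per-vertex values are $\ell_0$ sketches and whose aggregate function $g$ is sketch addition, and then verify each of the four operations against the stated bounds.

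First I would establish correctness of $\query$. By the dynamic-tree guarantee, a $\subtreequery(v)$ returns $g$ applied over the subtree containing $v$, which here is $\sum_{w \in \component(v)} \ell_0(w)$. The key observation is linearity of the Cormode~\etal{} $\ell_0$ sketch: every edge $(x,y)$ with both endpoints inside $\component(v)$ contributes once to $\ell_0(x)$ and once to $\ell_0(y)$ with opposite signs (under the usual signed-indicator encoding of the edge on its endpoint sketches), so such internal edges cancel in the aggregate. What survives is exactly the sketch of the characteristic vector of the cut $(\component(v),\, \nodes \setminus \component(v))$, so the sampling step of the underlying $\ell_0$ sketch returns an edge crossing that cut with some fixed constant probability $p > 0$.

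Next I would bound the space. There are $\nodesize$ leaf sketches and $O(\nodesize)$ internal aggregate nodes in the dynamic tree, each storing one $\ell_0$ sketch of size $O(\log^2 \nodesize)$ bits, yielding $O(\nodesize \log^2 \nodesize)$ overall.

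For the time bounds I would invoke the standard dynamic-tree properties recalled in Appendix~\ref{app:dynamic_trees}: any $\update$ to a single vertex value triggers recomputation of $O(\log \nodesize)$ ancestor aggregates, and any $\link$ or $\cut$ restructures $O(\log \nodesize)$ aggregate nodes. Combining two $\ell_0$ sketches costs $O(\log \nodesize)$ time, since the sketch has $O(\log \nodesize)$ levels of $O(1)$ machine words each and addition is performed level by level. Multiplying $O(\log \nodesize)$ touched nodes by $O(\log \nodesize)$ per sketch combine gives the $O(\log^2 \nodesize)$ bound for $\update$, $\link$, and $\cut$. An $\update(e=(u,v))$ simply does this twice, once for $u$ and once for $v$, which is still $O(\log^2 \nodesize)$. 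For $\query(v)$, the dynamic tree exposes the already-maintained root aggregate of $\component(v)$ in $O(\log \nodesize)$ time, and a single $\ell_0$ recovery on that aggregate is also $O(\log \nodesize)$, for a total of $O(\log \nodesize)$.

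The main obstacle is the correctness argument for $\query$: one must carefully justify that edges internal to a component really cancel in the linear combination of per-vertex sketches, so that the aggregate the dynamic tree returns is exactly the sketch of the cut vector and the $\ell_0$ sampler's recovery guarantee transfers to a cut-edge guarantee. The remaining pieces (space accounting, per-combine cost, and counting of touched aggregates) follow from standard components, but must be composed so that the sketch-combine cost multiplies the number of touched nodes once rather than twice, to avoid an extra $\log$ factor in the final bound.
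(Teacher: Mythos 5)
Your proposal is correct and follows essentially the same route as the paper's argument: instantiate the cutset as a dynamic tree augmented with per-vertex $\ell_0$ sketches, charge $O(\log \nodesize)$ touched aggregate nodes times $O(\log \nodesize)$ per sketch operation for $\link$, $\cut$, and $\update$, and read the query off the root aggregate. The only difference is that you explicitly re-derive the cancellation of internal edges for $\query$ correctness, which the paper delegates to its background discussion of characteristic vectors and sketch linearity; note only that with the $\mathbb{F}_2$ encoding used here the cancellation is by XOR rather than by opposite signs, which does not affect the argument.
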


\subsubsection{\Gibb \locked}
Gibb's algorithm~\cite{gibb2015dynamic} maintains cutset data structures $\cutset_0, \hdots, \cutset_{top}$ on the graph $\graph$ (where $top = O(\log \nodesize)$) for $O(\log \nodesize)$ levels of forests $\forest_0, \hdots, \forest_{top}$ where the forests at higher levels represent successively larger partial connected components, and eventually represent the true components. 
Additionally, a separate dynamic tree $\tree$ is kept which supports path queries for the maximum weight edge (see Appendix~\ref{app:dynamic_trees}). The edges in this tree are identical to $\edges_{\forest_{top}}$ and the weight of each edge $\edge$ is defined to be the lowest level $i$ such that $\edge \in \edges_{\forest_{i}}$.
Formally, the algorithm maintains the following invariants, giving Lemma~\ref{lem:gibb_invs}:
\begin{enumerate}[topsep=0pt,itemsep=0pt,parsep=0pt,leftmargin=25pt]
    \item \label{inv:gibb1} $\forest_0 = (\nodes, \emptyset)$ is just the vertices of $\graph$.
    \item \label{inv:gibb2} For each level $0 \leq i < top, \forest_i \subseteq \forest_{i+1}$.
    \item \label{inv:gibb3} For each level $0 \leq i < top$, for each component $\component \in \forest_i$, if a query on $\cutset_i$ for component $\component$ would be successful, then $\component$ is a proper subset of some component $\component' \in \forest_{i+1}$.
\end{enumerate}

\begin{lemma}[Adapted from~\cite{gibb2015dynamic}, Lemma 3.3] \label{lem:gibb_invs}
    For any graph $\graph$ and any constant $c$, let $p$ be the probability of success for the cutset data structure, let $\alpha = \lceil \log_{4/(4-p)} \nodesize \rceil$, and let $top = \max\{2\alpha/\beta,8c\ln \nodesize/\beta\}$ where $\beta = (1-p)/(1-p/2)$.
    If Invariants~\ref{inv:gibb1}--\ref{inv:gibb3} hold for the data structure, then $\forest_{top}$ is a spanning forest of $\graph$ with probability at least $1-1/\nodesize^c$.
\end{lemma}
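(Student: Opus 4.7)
The plan is to prove the contrapositive via a union bound over the true connected components of $\graph$: if $\forest_{top}$ is not a spanning forest, then some true component $\component^*$ of $\graph$ must be split across multiple components of $\forest_{top}$. Since $\graph$ has at most $\nodesize$ true components, it suffices to show that each one fails to be unified with probability at most $\nodesize^{-(c+1)}$.

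Fix a true component $\component^*$, and let $k_i$ denote the number of components of $\forest_i$ that intersect $\component^*$. We have $k_0 \leq |\component^*|$ and want $k_{top} = 1$. The first key observation is that cutset queries only return actual edges of $\graph$, so for any strict subcomponent $\component \subsetneq \component^*$ of $\forest_i$, any edge returned by a query on $\cutset_i$ for $\component$ has both endpoints in $\component^*$. Hence successful queries on subcomponents of $\component^*$ only induce merges \emph{within} the partition of $\component^*$. The second observation is that whenever $k_i > 1$, each of the $k_i$ subcomponents has a crossing edge in $\graph$, so by Lemma~\ref{lem:cutset} its cutset query succeeds with probability at least $p$; by Invariant~\ref{inv:gibb3}, a successful query forces that subcomponent to be strictly contained in a component of $\forest_{i+1}$. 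Finally, because the cutset data structures $\cutset_0,\ldots,\cutset_{top}$ use independent randomness, query successes across levels are independent.

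The core technical step is to show a per-level contraction of the form $\mathbb{E}[k_{i+1} \mid k_i] \leq (1-\beta/2)\,k_i$ (or an analogous statement yielding the stated constants). The intuition is that at each level, successful queries define a random functional-like graph on the $k_i$ subcomponents (each node has at most one outgoing edge, each pointing to another subcomponent of $\component^*$), and a counting argument on the weakly connected components of this graph yields roughly $p k_i / 2$ distinct merges in expectation, giving the contraction factor. The choice $top = \max\{2\alpha/\beta,\, 8c\ln\nodesize/\beta\}$ then admits a two-phase analysis: the first $2\alpha/\beta$ levels drive $k_i$ from $\nodesize$ down to a constant in expectation (via iterating the contraction with $\alpha = \lceil\log_{4/(4-p)}\nodesize\rceil$), and the remaining $8c\ln\nodesize/\beta$ levels provide Chernoff-style amplification: once $k_i \geq 2$, the probability that no subcomponent's query succeeds is at most $(1-p)^{k_i} \leq 1-\beta$, so across $\Theta(c\log\nodesize)$ independent levels the failure probability drops below $\nodesize^{-(c+1)}$.

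The main obstacle will be the per-level contraction inequality, because the merges induced by successful queries are correlated and can chain: a single query reveals a \emph{random} crossing edge, so two subcomponents may land in the same target, and the relationship between ``number of successes'' and ``reduction in $k_i$'' requires a careful combinatorial argument on the resulting functional subgraph rather than a naive $k_{i+1} \leq k_i - (\text{successes})$ bound. Once the contraction factor $(1-\beta/2)$ is established, the two-phase expectation-plus-Chernoff argument and the final union bound over the $\leq \nodesize$ true components are routine and yield the claimed $1 - 1/\nodesize^c$ success probability.
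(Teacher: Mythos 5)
First, note that the paper does not actually prove this lemma: it is stated as ``Adapted from [Gibb et al.], Lemma 3.3'' and the proof is deferred entirely to that reference. So there is nothing in the paper to compare against line by line; what I can do is assess your sketch on its own terms. Your skeleton is the natural (and, I believe, the intended) one: union-bound over the true components of $\graph$, observe that every crossing edge of a strict subcomponent of a true component $\component^*$ stays inside $\component^*$, use Lemma~\ref{lem:cutset} to get success probability at least $p$ per non-maximal subcomponent, use Invariant~\ref{inv:gibb3} to convert successes into merges, and track the number $k_i$ of pieces of $\component^*$ level by level. The counting step you worry about is in fact not the hard part: if $s$ of the $k_i$ subcomponents succeed, each level-$(i+1)$ part containing $j\ge 1$ successful subcomponents must contain at least $\max(j,2)\ge j/2+1$ level-$i$ subcomponents (properness from Invariant~\ref{inv:gibb3}), which gives the deterministic bound $k_{i+1}\le k_i-s/2$ and hence $\mathbb{E}[k_{i+1}\mid k_i]\le(1-p/2)k_i$ using only the marginal success probability. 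No functional-graph analysis is needed.

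The genuine gaps are quantitative. First, your amplification inequality $(1-p)^{k_i}\le 1-\beta$ is false for small $p$: with $p=0.1$ and $k_i=2$ the left side is $0.81$ while $1-\beta=(p/2)/(1-p/2)\approx 0.053$. Moreover $(1-p)^{k_i}$ presupposes that the $k_i$ query-failure events at a single level are independent, which Lemma~\ref{lem:cutset} does not give you (it only guarantees each marginal); the repairable version is the weaker ``at least one of the $k_i\ge 2$ queries succeeds with probability $\ge p$, and any success forces $k_{i+1}\le k_i-1$.'' Second, your claimed contraction factor $(1-\beta/2)$ does not connect to the lemma's constants: the correct expected contraction is $(1-p/2)$, whereas $\alpha=\lceil\log_{4/(4-p)}\nodesize\rceil$ corresponds to a base-$(1-p/4)$ decay and $\beta=(1-p)/(1-p/2)$ behaves oppositely to $p$ (it tends to $0$ as $p\to 1$), so the bookkeeping that turns the per-level contraction plus the amplification phase into exactly $top=\max\{2\alpha/\beta,8c\ln\nodesize/\beta\}$ and failure probability $\nodesize^{-c}$ is the part that is actually missing. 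Your cross-level independence claim is fine in spirit but should be phrased as an iterated-expectation/martingale argument, since the level-$(i+1)$ partition is itself random; conditioning on it, the fresh level-$(i+1)$ sketch randomness still gives success probability $\ge p$ per piece.
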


Appendix~\ref{app:gibb} contains a detailed description and illustrative example of the update and query procedures for \gibb.
Gibb~\etal prove that for a single update, the invariants are maintained w.h.p. and they bound the total update time by $O(\log^4 \nodesize)$. Thus the answers to connectivity queries are correct w.h.p. across a polynomial number of updates. They also show that connectivity queries can be answered in $O(\log \nodesize / \log \log \nodesize)$ time.

\section{Parallel \Sketchdcname Dynamic Connectivity \locked}\label{sec:algorithm}
Towards the goal of a dynamic connectivity system with small space, fast queries, and reasonable update throughput, we present a parallel \sketchdcname dynamic connectivity algorithm which has low span per update. This algorithm forms the basis for our practical system \sysname (see Section~\ref{sec:perf_eng}).

In the interest of designing a performant algorithm, we categorize dynamic connectivity updates in \gibb into two cases based on their impact on the cutset data structure.
We define an \defn{isolated update} as an update during which any $\link$ or $\cut$ operation is induced in any cutset.
We define a \defn{normal update} as one that induces no $\link$ or $\cut$ operations.
These definitions naturally extend to the algorithm we propose later in this section, which also make use of cutset data structures.
Since normal updates don't induce structural changes to the spanning forest at any level, they are considered the ``easy'' case.

Experimentally, we found that isolated updates are rare in our data (see Appendix~\ref{app:isolated}), indicating that it would be promising to design an algorithm that performs exceptionally well on normal updates.
Thus, our new algorithm is designed to have a low asymptotic cost for normal updates.
In Appendix~\ref{app:gibb} we prove that \gibb performs normal updates in $O(\log^3 \nodesize)$ time.
In contrast, our new parallel algorithm in this section performs normal updates in only $O(\log^2 \nodesize)$ work and $O(\log \nodesize)$ depth.
In Section~\ref{sec:gibb_results} we experimentally compare our algorithm against \gibb, and find that our algorithm is significantly faster.

\subsection{Parallel Cutset \locked} \label{sec:par_cutset}

We present a \defn{parallel cutset data structure} and analyze it in the binary fork-join model~\cite{binaryforking}.
Our parallel cutset achieves parallelism in $\link$ and $\cut$ operations and achieves better (sequential) time complexity for $\update$ operations compared to the cutset described in Section~\ref{sec:hybrid_solutions}.
The first straightforward improvement over the sequential cutset data structure is to use \cameosketch~\cite{landscape} rather than the sketch of Cormode~\etal~\cite{cormode2014unifying}, reducing the time for sketch updates from $O(\log \nodesize)$ to $O(1)$, and reducing the time of the cutset $\update$ from $O(\log^2\nodesize)$ to $O(\log \nodesize)$.

Now we describe our parallel algorithms for $\link$ and $\cut$ operations.
Recall that the cutset data structure is implemented as a sketch-augmented dynamic tree (see Appendix~\ref{app:dynamic_trees} for details about dynamic trees).
Our algorithm is independent of the specific dynamic tree data structure used, and modifies the sequential $\link$ or $\cut$ algorithm as follows.
The operation proceeds as normal, except that whenever it would perform a sketch addition, instead of executing the addition it logs the pair of sketches to add and where to write the result. These addition tasks are stored in an array in the order the algorithm encounters them. 
This part takes $O(\log\nodesize)$ depth and work.

The next phase of the algorithm will employ parallelism to execute all the sketch additions.
A sketch is essentially an array of $O(\log \nodesize)$ machine words, and a sketch addition simply adds each array entry to the corresponding entry in the other sketch (see Appendix~\ref{app:prelims} for more detail).
Our algorithm forks into $O(\log\nodesize)$ threads (in $O(\log\log \nodesize)$ depth in the binary fork-join model), one for each index in the array of machine words per sketch. Each thread then asynchronously follows the sequence of stored sketch additions and executes the addition only for the index in the sketch array which that thread is responsible for.
Since there are only $O(\log V)$ sketch additions in the sequence, the depth of this part is $O(\log V)$, and the work is $O(\log^2\nodesize)$.
Finally, all of the threads are joined in $O(\log\log\nodesize)$ depth, giving us the following lemma:
%
\begin{lemma}\label{lem:par_cutset}
    For a graph $\graph = (\nodes, \edges)$ and a forest $\forest = (\nodes, \edges_{\forest} \subseteq \edges)$ there exists a cutset data structure that performs $\link$ and $\cut$ operations in $O(\log \nodesize)$ depth and $O(\log^2 \nodesize)$ work in the binary fork-join model, performs $\update$ operations in $O(\log \nodesize)$ time, and performs $\query$ operations in $O(\log \nodesize)$ time.
\end{lemma}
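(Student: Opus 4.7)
The plan is to establish each operation's cost separately, leveraging the observation that a \cameosketch supports $O(1)$-time point updates and occupies $O(\log \nodesize)$ machine words, which together with the dynamic tree's structural properties control every cost in the lemma.

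For $\update$, I would substitute \cameosketch for the Cormode~\etal sketch in the cutset of Lemma~\ref{lem:cutset}. An edge update $(u,v)$ forces a sketch change at the leaves for $u$ and $v$, which must propagate up to $O(\log \nodesize)$ augmented internal dynamic-tree nodes. Since each point update and each internal-node aggregation step costs $O(1)$ with \cameosketch (as opposed to $O(\log \nodesize)$ with Cormode), the total cost collapses from $O(\log^2 \nodesize)$ to $O(\log \nodesize)$. For $\query$, the dynamic-tree subtree query walks $O(\log \nodesize)$ nodes, combining aggregate sketches en route, yielding a composite \cameosketch whose recovery procedure itself costs $O(\log \nodesize)$; adding these gives the claimed $O(\log \nodesize)$ bound.

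The heart of the proof is the $\link$/$\cut$ bound. I would argue it in two phases mirroring the algorithm described just above the lemma. In Phase~1, the sequential $\link$ or $\cut$ walks the dynamic tree and, whenever it would perform a sketch addition, instead appends the triple (source sketch, source sketch, destination slot) to an array; this phase has depth and work $O(\log \nodesize)$ and produces at most $O(\log \nodesize)$ logged addition tasks. In Phase~2, I would spawn $O(\log \nodesize)$ threads by a balanced binary fork with depth $O(\log \log \nodesize)$; thread $i$ is responsible for coordinate $i$ of every logged addition. Each thread sequentially sweeps the logged array and performs only its single-word addition per task, costing $O(\log \nodesize)$ per thread. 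A symmetric binary join in depth $O(\log \log \nodesize)$ completes the operation. Summing the three stages gives depth $O(\log \nodesize) + O(\log \log \nodesize) + O(\log \nodesize) + O(\log \log \nodesize) = O(\log \nodesize)$, and total work $O(\log \nodesize) + O(\log \nodesize) \cdot O(\log \nodesize) = O(\log^2 \nodesize)$.

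The main obstacle is correctness of the per-coordinate parallelism, since it must be shown that slicing the logged additions by word index yields the same final sketches as executing them sequentially. I would justify this by observing that a \cameosketch addition is coordinate-wise (XOR/independent linear operations on disjoint machine words), so different coordinates of the same sketch never alias, and that the logged sequence respects the sequential data dependencies by construction: two additions that share a destination appear in the log in the order in which the sequential algorithm would execute them, and each thread processes them in that same order. Hence no per-coordinate race can change the final value, and the parallel execution is observationally equivalent to the sequential one, completing the proof.
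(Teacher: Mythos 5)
Your proof matches the paper's argument essentially step for step: swap in \cameosketch to bring $\update$ down to $O(\log \nodesize)$, and parallelize $\link$/$\cut$ by sequentially logging the $O(\log\nodesize)$ sketch additions in $O(\log\nodesize)$ depth and work and then forking one thread per machine-word coordinate in $O(\log\log\nodesize)$ depth, exactly as the paper does; your explicit correctness argument for the per-coordinate slicing is a welcome addition that the paper leaves implicit. One small caution on the $\query$ bound: the subtree query should not ``combine aggregate sketches en route'' (each such combination is a full addition over $O(\log\nodesize)$ machine words, which would give $O(\log^2\nodesize)$ total); rather, the root of the augmented dynamic tree already stores the component's aggregate, so the query pays $O(\log\nodesize)$ to reach the root plus $O(\log\nodesize)$ to query that single sketch.
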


\subsection{Parallel Connectivity Algorithm \locked}\label{sec:par_conn}
Here we describe our parallel algorithm for sketch-based dynamic connectivity in the binary fork-join model~\cite{binaryforking}.
Our starting point is similar to that of \gibb: we maintain $O(\log \nodesize)$ levels of cutset data structures, this time parallel cutsets. Our algorithm also maintains Invariants~\ref{inv:gibb1}--\ref{inv:gibb3}.
Algorithm~\ref{alg:par_update} shows the pseudo-code for our parallel algorithm for edge insertions and deletions.

\begin{algorithm}[ht]
\caption{$\mathsf{ParallelUpdate}(e=(u,v), \text{bool } del)$}
\label{alg:par_update}
\begin{algorithmic}[1]
    \For {\textbf{parallel} $i \in [0,top]$} \label{line:par_update_start}
        \State $\cutset_i.\update(e)$
        \If {$del \land e \in \edges_{\forest_i}$} $\cutset_i.\cut(u,v)$ \EndIf
    \EndFor \label{line:par_update_end}

    \Statex
    \For {\textbf{parallel} $i \in [0,top-1]$} \label{line:par_iso_start}
        \State $violation_i \gets \infty$
        \For {$w \in \{u,v\}$}
            \State $\component_i \gets$ level $i$ component containing $w$
            \State $\component_{i+1} \gets $ level $i+1$ component containing $w$
            \State $\edge_{link} \gets \cutset_i.\query(w)$
            \If{$(\edge_{link} \neq \emptyset) \land (\component_i  = \component_{i+1})$}
            \State $violation_i \gets i$ \EndIf
        \EndFor
    \EndFor \label{line:par_iso_end}

    \Statex
    \State $min_\ell \gets \mathsf{ParallelMin}_{i=0}^{top-1} (violation_i)$ \label{line:par_minimum}
    \If{$min_\ell < \infty$} \Return \EndIf
    
    \Statex
    \For {$i \in [min_\ell,top-1]$} \label{line:par_main_loop_start}
    \For {$w \in \{u,v\}$}
        \State $\component_i \gets$ level $i$ component containing $w$
        \State $\component_{i+1} \gets $ level $i+1$ component containing $w$
        \State $(u_{link}, v_{link}) \gets \cutset_i.\query(w)$
        \If{$((u_{link}, v_{link}) \neq \emptyset) \land (\component_i = \component_{i+1})$}
            \If{$\tree.\connected(u_{link}, v_{link})$}
                \State ($u_{cut},v_{cut},\ell) \gets \tree.\pathquery(u_{link},v_{link})$
                \For {\textbf{parallel} $j \in [\ell,top]$} \label{line:par_cut_start}
                    \State $\cutset_j.\cut(u_{cut},v_{cut})$ \label{line:par_cut_end}
                \EndFor
            \EndIf
            \For {\textbf{parallel} $j \in [i+1,top]$} \label{line:par_link_start}
                \State $\cutset_j.\link(u_{link},v_{link})$ \label{line:par_link_end}
            \EndFor
        \EndIf
    \EndFor
    \EndFor \label{line:par_main_loop_end}
\end{algorithmic}
\end{algorithm}

First, each cutset calls $\update$ on the updated edge (parallel for loop on lines~\ref{line:par_update_start}--\ref{line:par_update_end}). This can be done in parallel for each level since the updates are independent of each other.
Additionally, if the update is an edge deletion and $e \in \edges_{\forest_i}$ then $\cut(u,v)$ is called on $\cutset_i$ (line~\ref{line:gibb_update_cut}).
In the binary fork-join model, the threads for all iteration in the parallel for loop can be spawned (and later joined) in $O(\log\log\nodesize)$ depth. Each iteration of the loop is a single $\update$ operation in $O(\log\nodesize)$ work.
Each $\cut$ has $O(\log \nodesize)$ depth and $O(\log^2 \nodesize)$ work, so overall this part takes $O(\log\nodesize)$ depth and $O(\log^3\nodesize)$ work.

Next, the algorithm checks for components that violate Invariant~\ref{inv:gibb3} at each level.
Importantly, as in \gibb~\cite{gibb2015dynamic}, during an update a component can only violate Invariant~\ref{inv:gibb3} if it contains $u$ or $v$.
A parallel for loop is used with one iteration for each level (lines~\ref{line:par_iso_start}--\ref{line:par_iso_end}).
The thread for each level locally computes whether the components containing $u$ and $v$ violate Invariant~\ref{inv:gibb3}.
Checking if a component violates Invariant~\ref{inv:gibb3} can be done by querying $\cutset_i$ with vertices $u$ and $v$. If the query is successful (returning an edge $e_{link}=(u_{link},v_{link})$) and $\component_{i+1}$ is equal to $\component_i$ (equality can be checked by comparing the size of the components which can be obtained from the dynamic tree data structure), then $\component_i$ violates Invariant~\ref{inv:gibb3}.
This part of the algorithm takes $O(\log\nodesize)$ time per loop iteration for a total of $O(\log\nodesize)$ depth and $O(\log^2\nodesize)$ work.
Once the threads are all joined, a parallel minimum operation is used to determine whether any level had a component violating Invariant~\ref{inv:gibb3} (line~\ref{line:par_minimum}).
If no violations of the invariants were found, the update is complete and the algorithm can terminate early.
Normal updates would be complete at this point, and the work and depth are $O(\log^2 \nodesize)$ and $O(\log \nodesize)$ respectively.

Else the algorithm must proceed to restore the invariants.
Starting from the lowest level with a violating component, level $min_\ell$, to the $top-1$ level, we use a parallel version of the process from \gibb for restoring the invariants (lines~\ref{line:par_main_loop_start}--\ref{line:par_main_loop_end}).
At each level we check whether the components containing $u$ and $v$ violate Invariant~\ref{inv:gibb3}. If it does, it returns an edge $e_{link}=(u_{link},v_{link})$. The algorithm should call $\link(u_{link},v_{link})$ at all levels above $min_\ell$ to restore Invariant~\ref{inv:gibb3} while maintaining Invariant~\ref{inv:gibb2} (lines~\ref{line:par_link_start}--\ref{line:par_link_end}).

However, it may not be possible to immediately link the edge $\edge_{link}$ in $\forest_j$ if $u_{link}$ and $v_{link}$ are already connected in $\forest_j$, because then this would form a cycle. So it is necessary to first find the minimum level $\ell > i$ where $u_{link}$ is connected to $v_{link}$, and disconnect them in all levels $\geq \ell$ (lines~\ref{line:par_cut_start}--\ref{line:par_cut_end}).
The lowest level where two vertices are connected corresponds to the maximum weight edge on their path in $\tree$.
To find this level (and a corresponding edge to cut), the algorithm calls $\tree.\pathquery(u_{link},v_{link})$.

The set of $\link$ operations and set of $\cut$ operations are performed in parallel for each level, and use the parallel cutset $\link$ and $\cut$ algorithms described earlier in this section.
Each $\link$ or $\cut$ takes $O(\log\nodesize)$ depth and $O(\log^2\nodesize)$ work. So each time these parallel for loops are executed it adds $O(\log\nodesize)$ depth and $O(\log^3\nodesize)$ work.
These loops happen at most once per iteration in the main for loop, so in the worst case an update takes $O(\log^2\nodesize)$ depth and $O(\log^4\nodesize)$ work.
Our parallel algorithm yields the following theorem:
\begin{theorem}\label{thm:par_alg}
    For a graph $\graph=(\nodes,\edges)$, there exists a dynamic connectivity algorithm that performs isolated updates in $O(\log^2 \nodesize)$ depth and $O(\log^4 \nodesize)$ work and performs normal updates in $O(\log \nodesize)$ depth and $O(\log^2 \nodesize)$ work in the binary fork-join model.
\end{theorem}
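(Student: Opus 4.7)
The plan is to break the algorithm into its four phases, bound the depth and work of each using the parallel-cutset bounds of Lemma~\ref{lem:par_cutset}, and then sum contributions separately for the two cases. Correctness of the invariants can be inherited directly from the analysis of \gibb, since the algorithm performs the same logical operations in the same order within each level (the only change is exposing independent work to parallelism); what remains is a resource accounting argument.

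For the normal case, I would show that only Phases~1--3 execute. Phase~1 (\lirefs{par_update_start}{par_update_end}) fans out to $O(\log\nodesize)$ threads in $O(\log\log\nodesize)$ depth; for a normal update no $\cut$ fires, so each thread performs a single $\update$ costing $O(\log\nodesize)$ work per Lemma~\ref{lem:par_cutset}. This gives $O(\log\nodesize)$ depth and $O(\log^2\nodesize)$ work. Phase~2 (\lirefs{par_iso_start}{par_iso_end}) again spawns $O(\log\nodesize)$ threads, each running a constant number of $\query$ operations plus component-size comparisons in $O(\log\nodesize)$ time; this contributes $O(\log\nodesize)$ depth and $O(\log^2\nodesize)$ work. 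Phase~3 is a parallel minimum over $O(\log\nodesize)$ values, which is $O(\log\log\nodesize)$ depth and $O(\log\nodesize)$ work. Since no level reports a violation, the algorithm exits at line~\ref{line:par_minimum}. Summing gives the claimed $O(\log\nodesize)$ depth and $O(\log^2\nodesize)$ work bound for normal updates.

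For the isolated case, Phase~4 (\lirefs{par_main_loop_start}{par_main_loop_end}) also runs. The outer loop is sequential over at most $top = O(\log\nodesize)$ levels. Inside a single iteration, the only nontrivial work beyond $O(\log\nodesize)$ cutset queries and a single $\pathquery$ on $\tree$ (each $O(\log\nodesize)$) is a parallel block of $\cut$ operations (\lirefs{par_cut_start}{par_cut_end}) and a parallel block of $\link$ operations (\lirefs{par_link_start}{par_link_end}). Each such block fans out to $O(\log\nodesize)$ levels, and by Lemma~\ref{lem:par_cutset} every parallel $\link$ or $\cut$ takes $O(\log\nodesize)$ depth and $O(\log^2\nodesize)$ work; running them in parallel across levels keeps depth at $O(\log\nodesize)$ while incurring $O(\log^3\nodesize)$ work per iteration. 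Multiplying by the $O(\log\nodesize)$ sequential iterations yields $O(\log^2\nodesize)$ depth and $O(\log^4\nodesize)$ work for Phase~4. Including the potential parallel $\cut$s inside Phase~1 when the update is a deletion of a tree edge only adds $O(\log\nodesize)$ depth and $O(\log^3\nodesize)$ work, which is subsumed by Phase~4's bound. Combining with Phases~1--3 gives the stated $O(\log^2\nodesize)$ depth and $O(\log^4\nodesize)$ work for isolated updates.

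The main obstacle I anticipate is justifying that the cross-level $\link$ and $\cut$ batches inside a single iteration of Phase~4 can indeed execute in parallel without racing on shared state: the cutsets at different levels are distinct data structures, but the dynamic tree $\tree$ used for $\pathquery$ is shared, so I would need to argue that $\tree$ is only read (not modified) during these parallel blocks, with structural updates to $\tree$ funneled into a separate serial step between iterations. The remaining subtleties, verifying that normal updates leave every invariant untouched (so early termination is safe) and that the per-iteration violation test aligns with Gibb's structural invariant, follow directly from the discussion preceding Lemma~\ref{lem:gibb_invs} and the fact that only components containing $u$ or $v$ can change.
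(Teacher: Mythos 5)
Your proposal is correct and follows essentially the same phase-by-phase accounting as the paper: Phases 1--3 give $O(\log\nodesize)$ depth and $O(\log^2\nodesize)$ work with early exit for normal updates, and the sequential outer loop of Phase~4 with parallel cross-level $\link$/$\cut$ batches (each $O(\log\nodesize)$ depth, $O(\log^3\nodesize)$ work per iteration by Lemma~\ref{lem:par_cutset}) gives the $O(\log^2\nodesize)$-depth, $O(\log^4\nodesize)$-work bound for isolated updates. The concurrency concern you raise is legitimate but discharges easily---the cutsets at different levels are disjoint data structures, and $\tree$ is only read (via the $\pathquery$ preceding the parallel blocks) and modified solely in lockstep with $\cutset_{top}$, i.e., in a single iteration of each parallel batch---a point the paper leaves implicit.
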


This algorithm is interesting because it has the lowest update depth of any algorithm in the dynamic connectivity literature (improving on the best known result of $O(\log^3\nodesize)$ depth~\cite{acar2019parallel, deman2025towards} in the strictly stronger PRAM model).
It also forms the basis for our practical system \sysname, described in Section~\ref{sec:perf_eng}.

\section{System Design} \label{sec:perf_eng}

In this section we describe 
\fullsysname, our system based on the parallel \sketchdcname dynamic connectivity algorithm from Section~\ref{sec:algorithm}. Our algorithm is implemented for a single multi-core shared memory machine.
We describe our system and several heuristics and design choices that are crucial for achieving good performance.

\subsection{Implementation Details \locked}\label{sec:implementation_details}

We developed our own implementation of Euler tour trees which uses a skip list as the underlying data structure, maintains an $\ell_0$-sketch augmented value for each vertex, and supports querying for the sum of all sketches in a target component.
%
For $\ell_0$-sketching use the \cameosketch code from Landscape~\cite{landscape}.
To optimize our memory usage,
we only maintain a sketch for a single instance of each vertex in the Euler tour.
This allows us to avoid storing approximately half of the sketches in the bottom level of each skip list. This is significant because the majority of nodes are in the bottom level of a skip list and the sketches asymptotically dominate the memory usage.

In our skip list implementation we randomly select the height of each element from a geometric distribution $\geo(1 - 1 / \log \nodesize)$ (see Appendix~\ref{app:skiplist} for details about skip lists). We refer to this as a \defn{reduced height skip list}. Intuitively, this heuristic leads to skip lists with lower height, meaning an $\update$ operation affects less sketches.
Additionally, we maintain ``parent pointers'' for each skip list node. We define the \defn{parent} of a level $i$ skip list node $w$ as the rightmost level $i+1$ node whose corresponding level $i$ node is to the left of $w$.
In Appendix~\ref{app:skiplist} we prove that a reduced height skip list with parent pointers has height $O(\log \nodesize / \log\log \nodesize)$ w.h.p., and any search path has length $O(\log^2 \nodesize)$ w.h.p. When applied to the cutset data structure, this allows for $\update$ in $O(\log \nodesize / \log \log \nodesize)$ time, while the time for $\link$ and $\cut$ operations increases by at most an $O(\log \nodesize)$ factor.
We experimentally evaluate the impact of this optimization in Section~\ref{sec:gibb_results}, and found it to highly beneficial for all of our datasets.

We also developed our own implementation of link-cut trees supporting path maximum queries using splay trees as the underlying data structure.
To answer connectivity queries we use an Euler tour tree built on a reduced height skip list that is not augmented with sketches. We found this to perform much better than using the link-cut tree for queries. Since there is no sketch augmentation, the links and cuts to this query Euler tour tree are fast.

\subsection{Partitioned Memory Design \locked}
We partition the memory used for \sysname into $top = O(\log\nodesize)$ parts and assign a process to each of these parts. Process $i$ is responsible for maintaining the $i$-th level of the dynamic connectivity data structure. That is, it maintains the parallel cutset data structure for level $i$ which is made up of Euler tour trees. The $top$ level is the only exception, because it instead contains the sketchless Euler tour tree used for connectivity queries and the link-cut tree.
Processes cannot access the memory assigned to other processes directly, and communicate with each other only using MPI. The $top$ level receives the input stream and connectivity queries.
Partitioning the levels of the data structure in this way minimizes communication in the easy case (normal updates) and decreases cache contention in the hard case (isolated updates).
For queries, as long as all pending updates have been processed, no communication needs to occur between processes because the query can be answered by the Euler tour tree at level $top$.

For convenience of presentation, we use a slightly different definition of an isolated update in this Section and Section~\ref{sec:update_buffer}. We call a component that violates Invariant~\ref{inv:gibb3} an \defn{isolated component}. We define an \defn{isolated update} to be an update for which there is any isolated component. This only differs from the previous definition by excluding updates that cut a current spanning forest edge, but have no isolated components.

\begin{figure*}[ht]
    \centering
\includegraphics[width=.8\linewidth]{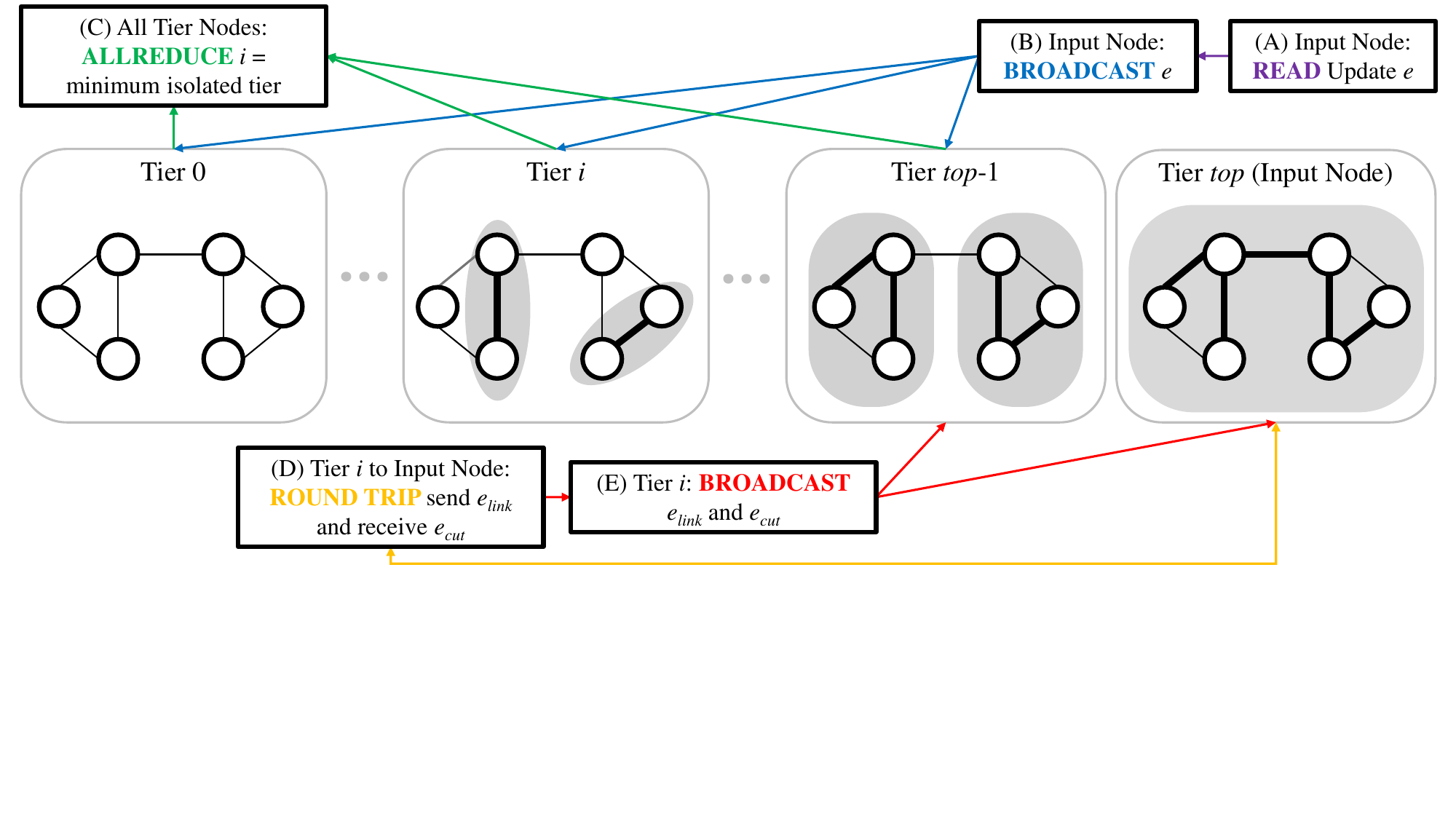}
    \caption{\small
    We illustrate our partitioned memory design using the example from Figure~\ref{fig:gibb_update}. Inter-tier communication is used to inform tiers of a new edge, identify isolated components, and insert/remove tree edges.
    First, (A) an update is read at the top tier and (B) the update is broadcasted to all other tiers. All tiers then locally update their cutset. Next, (C) an allreduce communication is used to determine the minimum isolated tier $i$. Then, (D) tier $i$ communicates with the $top$ tier to determine if there is a cycle edge to cut. Finally, (E) the new edge to link and the cycle edge to cut are broadcasted to all tiers $> i$. Steps C, D, and E are repeated until there is no isolated tier.
    }
    \label{fig:alg}
\end{figure*}

Figure~\ref{fig:alg} illustrates the inter-level data flow for normal and isolated updates.
In an update, (1) the edge $e$ is sent from the input stream to the $top$ process.
Then, (2) the $top$ process broadcasts $e$ to all processes, which perform process-local work like performing $\update(e)$ on their cutset.
Next, (3) each process determines whether its cutset has any isolated components, and an AllReduce operation determines the minimum process $i$ that is isolated. If there was no isolated component, the update was normal, and the update is complete.
For an isolated update, (4) process $i$ samples an edge (which we call $e_{link}$) from its isolated component and sends $e_{link}$ to the $top$ process, which responds with the lowest-weight edge in the resulting cycle $e_{cut}$ (if such an edge exists).
Finally, (5) process $i$ broadcasts $e_{link}$ and $e_{cut}$ to all processes. Now process $i$ is not isolated but higher processes may still be, so return to step 3 and continue until no processes are isolated.
%
%
Overall there are at most $O(\log \nodesize)$ rounds of communication, each with a constant amount of data transferred.

\subsection{\Speculative Graph Update Buffering \locked}\label{sec:update_buffer}
Here we describe a technique that takes advantage of potential parallelism between groups of consecutive normal updates, and reduces the frequency of communication rounds in our partitioned memory design.
In our current design, each update begins by broadcasting the update to all of the tiers. If the update was normal then each tier can handle the update independently and no further communication is needed.
Similarly, if we know ahead of time that there will be a group of several consecutive normal updates, they may all be buffered and broadcast in a single round of communication.

Our technique is to \defn{\speculatively buffer} a small number of updates hoping that they will all be normal updates. Then each of these updates will be partially processed.
Specifically, the group of updates is first broadcast to each tier, and the initial $\update$ and $\cut$ operations are applied for each update in the group.
Next, each tier finds the earliest update for which a component containing one of its endpoints is isolated.
Then, a second round of communication occurs to find the earliest update whose endpoints are contained in a violating component on any tier. We call this the \defn{earliest isolated update}. 
If there were no isolated updates, then the updates are done.

If there was an isolated update, then some of the \speculative work done from later updates in the group must be reverted. Specifically, the operations done for all of the updates after the earliest isolated update are undone.
For $\update$ operations, executing the same operation again cancels out the previous execution.
For $\cut$ operations, we must revert them with an $\link$ operation.
This returns the state of the data structure to what it would be if only the updates before the earliest isolated update were performed.
Then the regular algorithm algorithm is used to sequentially process the earliest isolated update and all following updates.
%
%
%
For queries, all of the updates that we have \speculatively buffered must be immediately processed before answering the query.
We experimentally evaluate the impact of buffer size on update throughput and query latency in Appendix~\ref{app:update_buffer_experiment}.

\section{Experimental Setup \locked}\label{sec:experimental_setup}

We implemented \sysname as a C++17 executable compiled with Open MPI version 5.0.2 and GCC version 13.2.0. All experiments were run on a 48-core AMD EPYC 7643 CPU with hyperthreading disabled and 256GB of RAM. This corresponds to the 96-core queues of a cluster, which consist of 2-socket machines with one of these CPUs in each socket. We restrict all runs to a single socket.

\begin{table*}[ht]
    \centering
    \small
    \caption{\small Dynamic graph input data used in the experiments. We use the abbreviations listed to refer to the standard streams, and append ``-FF'' to the abbreviation to indicate the fixed forest stream.}
    \begin{tabular}
    {|c|c c|c c c|c c|c c|}
        \hline
         & & & \multicolumn{3}{c|}{Static Graph} & \multicolumn{2}{c|}{Standard} & \multicolumn{2}{c|}{Fixed Forest} \\
        Category & Dataset & Abbrv. & $|\nodesize|$ & $|\edgesize|$ & $2|\edgesize|/|\nodesize|$ & $|U|$ & $|Q|$ & $|U|$ & $|Q|$ \\
        \hline
        \multirow{5}{*}{Dense}
         & Kronecker-13 & K13 & $2^{13}$ & 17M  & 4.2K & 18M  & 1.9M & 33M  & 3.7M \\
         & Kronecker-15 & K15 & $2^{15}$ & 270M & 17K  & 280M & 31M  & 530M & 59M  \\
         & Kronecker-16 & K16 & $2^{16}$ & 1.1B & 34K  & 1.1B & 120M & 2.1B & 240M \\
         & Kronecker-17 & K17 & $2^{17}$ & 4.3B & 66K  & 4.5B & 500M & 8.6B & 950M \\
         & Kronecker-18 & K18 & $2^{18}$ & 17B  & 130K & 18B  & 2.0B & 38B  & 3.8B \\
        \hline
        \multirow{4}{*}{Random}
         & Random-$N$          & RN   & 300K & 600K & 2  & 0.60M & 67K  & 13M & 1.4M \\
         & Random-$N\lg N$     & RLOG & 100K & 1.6M & 16 & 1.60M & 180K & 60M & 6.7M \\
         & Random-$N\sqrt{N}$  & RSQR & 20K  & 1.6M & 80 & 1.60M & 180K & 63M & 7.0M \\
         & Random-10-Comp & R10C & 100K & 1.6M & 16 & 1.60M & 180K & 60M & 6.7M \\
        \hline
        \multirow{5}{*}{Real-World}
         & Email-DNC        & DNC   & 1.9K & 4.4K & 2.4 & 8.8K & 970  & 103K  & 11K  \\
         & Tech-As-Topology & TECH  & 35K  & 110K & 3.1 & 220K & 24K  & 2.95M & 330K \\
         & Enron            & ENRN & 87K  & 300K & 3.4 & 590K & 66K  & 8.53M & 950K \\
         & Twitter          & TWIT  & 81K  & 580K & 7.2 & 1.8M & 200K & 21.1M & 2.3M \\
         & Stanford         & STAN  & 280K & 1.7M & 5.9 & 2.3M & 260K & 57.0M & 6.3M \\
        \hline
    \end{tabular}
    \label{tab:datasets}
\end{table*}

Table~\ref{tab:datasets} lists all the datasets we used for our experiments. These are a mix of synthetic Kronecker datasets from Tench \etal~\cite{tench2024graphzeppelin}, random graphs from Federov \etal~\cite{fedorov2021scalable}, and real-world graphs from Federov \etal and Chen \etal~\cite{chen2022dynamic}.
For each graph dataset, we generate a \defn{standard stream} of updates by inserting all edges and then deleting all edges.
For each dataset, we additionally generate a \defn{fixed forest stream} which first inserts a fixed spanning forest of the graph, and then repeatedly inserts and deletes all of the other edges not in the fixed spanning forest.
We generate both streams, because the standard streams have a high proportion of \treeedge updates and the fixed forest streams have a higher proportion of \nontreeedge updates.
We insert random connectivity queries into the streams in several small bursts of consecutive queries. Approximately 10\% of stream operations are queries.
Appendix~\ref{app:datasets} contains the complete details about how we generate streams of updates and queries for our datasets.

We compare our dynamic connectivity system with three other recently published works that tackle the same problem. These systems are \dtree~\cite{chen2022dynamic}, \idtree~\cite{xu2024constant}, and \cdcname~\cite{fedorov2021scalable}.
\dtree, \idtree, and \cdcname all maintain a lossless representation of all edges in the graph and use different strategies to search these edges for a replacement whenever a spanning forest deletion occurs.
We also compared against the structural trees implementation in the experimental survey of Chen~\etal~\cite{chen2025experimental}, but found that \dtree performed strictly better on our datasets, thus we omit structural trees from our experimental evaluation.
To compare against another sketch-based dynamic connectivity baseline, we also developed an implementation of \gibb. Prior to this work no implementations of a sketch-based dynamic connectivity algorithms existed, to the best of our knowledge.
Appendix~\ref{app:baselines} contains summaries of all the baselines systems and how we configure them in our experiments.

\section{Results}\label{sec:results}

\subsection{\sysname vs \Gibb \locked}\label{sec:gibb_results}
Figure~\ref{fig:gibb_plot} shows a comparison of the update and query throughput of \sysname versus an implementation of \gibb on a few representative inputs. We test \sysname with and without the reduced height skip list optimization (see Section~\ref{sec:implementation_details}).
\sysname performs updates significantly faster than \gibb on all inputs and is as much as two orders of magnitude faster than \gibb on the dense inputs. We also see that the reduced height optimization gives a notable improvement in update speed.
For query performance, \sysname is also significantly faster than \gibb. The reduced height optimization is highly impactful for query speed, yielding close to an order of magnitude speedup in some cases.

\begin{figure}[t]
    \centering
    \includegraphics[width=0.95\linewidth]{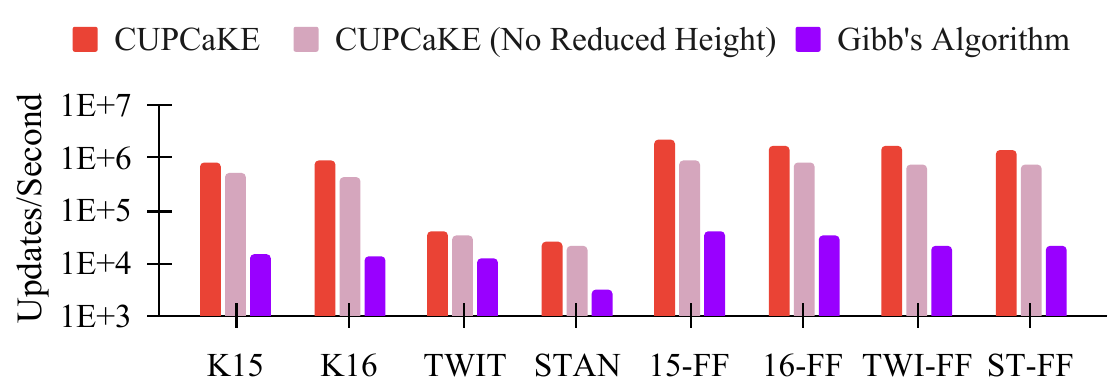}
    \includegraphics[width=0.95\linewidth]{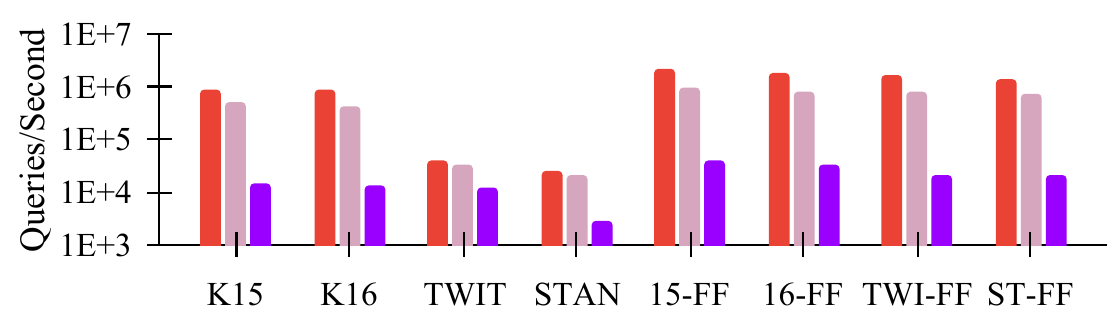}
    \caption{\small
    \sysname's updates and queries are 1-2 orders of magnitude faster than \gibb. 
    }
    \label{fig:gibb_plot}
\end{figure}

\begin{table}[ht]
\centering
\small
\caption{\small Memory usage for \sysname, \dtree, \idtree, and \cdcname. ``DNF'' indicates that the system ran out of memory. The best results are bolded for each dataset.}
\pgfplotstabletypeset[col sep=tab,
    header=false,
    zerofill, fixed, precision=3,
    every head row/.append style={after row={%
        \hline
        ~ & \multicolumn{4}{c|}{Peak Memory (GB)}\\ \hline
        \multicolumn{1}{|c|}{Dataset} & \sysacronym & \dtree & \cdcacronym & \idtree \\ \hline
        },
        output empty row
    },
    skip first n=2,
    skip rows between index={14}{28},
    columns={0,11,12,13,14},
    columns/0/.append style={string type},
	every first column/.append style={column type/.add={|}{|}},
	every last column/.append style={column type/.add={}{|}},
    every last row/.append style={after row=\hline},
    every row no 4/.append style={after row=\hline},
    every row no 9/.append style={after row=\hline},
    every row no 13/.append style={after row=\hline},
    every row no 18/.append style={after row=\hline},
    every row no 23/.append style={after row=\hline},
    every row 2 column 2/.append style={postproc cell content/.style={/pgfplots/table/@cell content/.add={}{DNF}}},
    every row 3 column 2/.append style={postproc cell content/.style={/pgfplots/table/@cell content/.add={}{DNF}}},
    every row 4 column 2/.append style={postproc cell content/.style={/pgfplots/table/@cell content/.add={}{DNF}}},
    every row 2 column 3/.append style={postproc cell content/.style={/pgfplots/table/@cell content/.add={}{DNF}}},
    every row 3 column 3/.append style={postproc cell content/.style={/pgfplots/table/@cell content/.add={}{DNF}}},
    every row 4 column 3/.append style={postproc cell content/.style={/pgfplots/table/@cell content/.add={}{DNF}}},
    every row 4 column 4/.append style={postproc cell content/.style={/pgfplots/table/@cell content/.add={}{DNF}}},
    every row 0 column 4/.append style={postproc cell content/.style={/pgfplots/table/@cell content/.add={$\bf}{$}}},
    every row 1 column 1/.append style={postproc cell content/.style={/pgfplots/table/@cell content/.add={$\bf}{$}}},
    every row 2 column 1/.append style={postproc cell content/.style={/pgfplots/table/@cell content/.add={$\bf}{$}}},
    every row 3 column 1/.append style={postproc cell content/.style={/pgfplots/table/@cell content/.add={$\bf}{$}}},
    every row 4 column 1/.append style={postproc cell content/.style={/pgfplots/table/@cell content/.add={$\bf}{$}}},
    every row 5 column 1/.append style={postproc cell content/.style={/pgfplots/table/@cell content/.add={$\bf}{$}}},
    every row 6 column 4/.append style={postproc cell content/.style={/pgfplots/table/@cell content/.add={$\bf}{$}}},
    every row 7 column 4/.append style={postproc cell content/.style={/pgfplots/table/@cell content/.add={$\bf}{$}}},    
    every row 8 column 4/.append style={postproc cell content/.style={/pgfplots/table/@cell content/.add={$\bf}{$}}},
    every row 9 column 4/.append style={postproc cell content/.style={/pgfplots/table/@cell content/.add={$\bf}{$}}},
    every row 10 column 4/.append style={postproc cell content/.style={/pgfplots/table/@cell content/.add={$\bf}{$}}},
    every row 11 column 4/.append style={postproc cell content/.style={/pgfplots/table/@cell content/.add={$\bf}{$}}},
    every row 12 column 4/.append style={postproc cell content/.style={/pgfplots/table/@cell content/.add={$\bf}{$}}},
    every row 13 column 4/.append style={postproc cell content/.style={/pgfplots/table/@cell content/.add={$\bf}{$}}},
    every row 14 column 4/.append style={postproc cell content/.style={/pgfplots/table/@cell content/.add={$\bf}{$}}},
    every row 15 column 1/.append style={postproc cell content/.style={/pgfplots/table/@cell content/.add={$\bf}{$}}},
    every row 16 column 1/.append style={postproc cell content/.style={/pgfplots/table/@cell content/.add={$\bf}{$}}},
    every row 17 column 1/.append style={postproc cell content/.style={/pgfplots/table/@cell content/.add={$\bf}{$}}},
    every row 18 column 1/.append style={postproc cell content/.style={/pgfplots/table/@cell content/.add={$\bf}{$}}},
    every row 19 column 1/.append style={postproc cell content/.style={/pgfplots/table/@cell content/.add={$\bf}{$}}},
    every row 20 column 2/.append style={postproc cell content/.style={/pgfplots/table/@cell content/.add={$\bf}{$}}},
    every row 21 column 2/.append style={postproc cell content/.style={/pgfplots/table/@cell content/.add={$\bf}{$}}},
    every row 22 column 2/.append style={postproc cell content/.style={/pgfplots/table/@cell content/.add={$\bf}{$}}},
    every row 23 column 2/.append style={postproc cell content/.style={/pgfplots/table/@cell content/.add={$\bf}{$}}},
    every row 24 column 2/.append style={postproc cell content/.style={/pgfplots/table/@cell content/.add={$\bf}{$}}},
    every row 25 column 2/.append style={postproc cell content/.style={/pgfplots/table/@cell content/.add={$\bf}{$}}},
    every row 26 column 2/.append style={postproc cell content/.style={/pgfplots/table/@cell content/.add={$\bf}{$}}},
    every row 27 column 2/.append style={postproc cell content/.style={/pgfplots/table/@cell content/.add={$\bf}{$}}},
    ]{figures/big-spreadsheet.tsv}
    \label{tab:results_space}
\end{table}

\begin{table*}[ht]
\centering
\small
\caption{\small Update and query speed for \sysname, \dtree, \idtree, and \cdcname. ``DNF'' indicates that the system ran out of memory. The best results are bolded for each dataset.}
\pgfplotstabletypeset[col sep=tab,
    header=false,
    sci,sci 10e, sci zerofill, precision=2,
    every head row/.append style={after row={%
        \hline
        ~ & \multicolumn{3}{c|}{Updates/second} & \multicolumn{3}{c|}{Queries/second} & Ops/second \\ \hline
        \multicolumn{1}{|c|}{Dataset} & \sysacronym & \dtree & \idtree & \sysacronym & \dtree & \idtree & \cdcacronym \\ \hline
        },
        output empty row
    },
    skip first n=2,
    columns={0,1,2,3,4,5,6,7},
    columns/0/.append style={string type},
    columns/1/.append style={preproc/expr={10^6*##1}},
    columns/4/.append style={preproc/expr={10^6*##1}},
	every first column/.append style={column type/.add={|}{|}},
	every last column/.append style={column type/.add={}{|}},
    every last row/.append style={after row=\hline},
    every row no 4/.append style={after row=\hline},
    every row no 9/.append style={after row=\hline},
    every row no 13/.append style={after row=\hline},
    every row no 18/.append style={after row=\hline},
    every row no 23/.append style={after row=\hline},
    every col no 3/.append style={column type/.add={}{|}},
    every col no 6/.append style={column type/.add={}{|}},
    every row 2 column 2/.append style={postproc cell content/.style={/pgfplots/table/@cell content/.add={}{DNF}}},
    every row 3 column 2/.append style={postproc cell content/.style={/pgfplots/table/@cell content/.add={}{DNF}}},
    every row 4 column 2/.append style={postproc cell content/.style={/pgfplots/table/@cell content/.add={}{DNF}}},
    every row 4 column 3/.append style={postproc cell content/.style={/pgfplots/table/@cell content/.add={}{DNF}}},
    every row 2 column 5/.append style={postproc cell content/.style={/pgfplots/table/@cell content/.add={}{DNF}}},
    every row 3 column 5/.append style={postproc cell content/.style={/pgfplots/table/@cell content/.add={}{DNF}}},
    every row 4 column 5/.append style={postproc cell content/.style={/pgfplots/table/@cell content/.add={}{DNF}}},
    every row 4 column 6/.append style={postproc cell content/.style={/pgfplots/table/@cell content/.add={}{DNF}}},
    every row 2 column 7/.append style={postproc cell content/.style={/pgfplots/table/@cell content/.add={}{DNF}}},
    every row 3 column 7/.append style={postproc cell content/.style={/pgfplots/table/@cell content/.add={}{DNF}}},
    every row 4 column 7/.append style={postproc cell content/.style={/pgfplots/table/@cell content/.add={}{DNF}}},
    every row 16 column 2/.append style={postproc cell content/.style={/pgfplots/table/@cell content/.add={}{DNF}}},
    every row 17 column 2/.append style={postproc cell content/.style={/pgfplots/table/@cell content/.add={}{DNF}}},
    every row 18 column 2/.append style={postproc cell content/.style={/pgfplots/table/@cell content/.add={}{DNF}}},
    every row 18 column 3/.append style={postproc cell content/.style={/pgfplots/table/@cell content/.add={}{DNF}}},
    every row 16 column 5/.append style={postproc cell content/.style={/pgfplots/table/@cell content/.add={}{DNF}}},
    every row 17 column 5/.append style={postproc cell content/.style={/pgfplots/table/@cell content/.add={}{DNF}}},
    every row 18 column 5/.append style={postproc cell content/.style={/pgfplots/table/@cell content/.add={}{DNF}}},
    every row 18 column 6/.append style={postproc cell content/.style={/pgfplots/table/@cell content/.add={}{DNF}}},
    every row 16 column 7/.append style={postproc cell content/.style={/pgfplots/table/@cell content/.add={}{DNF}}},
    every row 17 column 7/.append style={postproc cell content/.style={/pgfplots/table/@cell content/.add={}{DNF}}},
    every row 18 column 7/.append style={postproc cell content/.style={/pgfplots/table/@cell content/.add={}{DNF}}},
    every row 0 column 3/.append style={postproc cell content/.style={/pgfplots/table/@cell content/.add={$\bf}{$}}},
    every row 0 column 6/.append style={postproc cell content/.style={/pgfplots/table/@cell content/.add={$\bf}{$}}},
    every row 1 column 3/.append style={postproc cell content/.style={/pgfplots/table/@cell content/.add={$\bf}{$}}},
    every row 1 column 6/.append style={postproc cell content/.style={/pgfplots/table/@cell content/.add={$\bf}{$}}},
    every row 2 column 3/.append style={postproc cell content/.style={/pgfplots/table/@cell content/.add={$\bf}{$}}},
    every row 2 column 6/.append style={postproc cell content/.style={/pgfplots/table/@cell content/.add={$\bf}{$}}},
    every row 3 column 3/.append style={postproc cell content/.style={/pgfplots/table/@cell content/.add={$\bf}{$}}},
    every row 3 column 6/.append style={postproc cell content/.style={/pgfplots/table/@cell content/.add={$\bf}{$}}},
    every row 4 column 1/.append style={postproc cell content/.style={/pgfplots/table/@cell content/.add={$\bf}{$}}},
    every row 4 column 4/.append style={postproc cell content/.style={/pgfplots/table/@cell content/.add={$\bf}{$}}},
    every row 5 column 3/.append style={postproc cell content/.style={/pgfplots/table/@cell content/.add={$\bf}{$}}},
    every row 5 column 6/.append style={postproc cell content/.style={/pgfplots/table/@cell content/.add={$\bf}{$}}},
    every row 6 column 3/.append style={postproc cell content/.style={/pgfplots/table/@cell content/.add={$\bf}{$}}},
    every row 6 column 6/.append style={postproc cell content/.style={/pgfplots/table/@cell content/.add={$\bf}{$}}},
    every row 7 column 3/.append style={postproc cell content/.style={/pgfplots/table/@cell content/.add={$\bf}{$}}},
    every row 7 column 6/.append style={postproc cell content/.style={/pgfplots/table/@cell content/.add={$\bf}{$}}},
    every row 8 column 3/.append style={postproc cell content/.style={/pgfplots/table/@cell content/.add={$\bf}{$}}},
    every row 8 column 6/.append style={postproc cell content/.style={/pgfplots/table/@cell content/.add={$\bf}{$}}},
    every row 9 column 3/.append style={postproc cell content/.style={/pgfplots/table/@cell content/.add={$\bf}{$}}},
    every row 9 column 6/.append style={postproc cell content/.style={/pgfplots/table/@cell content/.add={$\bf}{$}}},
    every row 10 column 3/.append style={postproc cell content/.style={/pgfplots/table/@cell content/.add={$\bf}{$}}},
    every row 10 column 6/.append style={postproc cell content/.style={/pgfplots/table/@cell content/.add={$\bf}{$}}},
    every row 11 column 3/.append style={postproc cell content/.style={/pgfplots/table/@cell content/.add={$\bf}{$}}},
    every row 11 column 6/.append style={postproc cell content/.style={/pgfplots/table/@cell content/.add={$\bf}{$}}},
    every row 12 column 3/.append style={postproc cell content/.style={/pgfplots/table/@cell content/.add={$\bf}{$}}},
    every row 12 column 6/.append style={postproc cell content/.style={/pgfplots/table/@cell content/.add={$\bf}{$}}},
    every row 13 column 3/.append style={postproc cell content/.style={/pgfplots/table/@cell content/.add={$\bf}{$}}},
    every row 13 column 6/.append style={postproc cell content/.style={/pgfplots/table/@cell content/.add={$\bf}{$}}},
    every row 14 column 3/.append style={postproc cell content/.style={/pgfplots/table/@cell content/.add={$\bf}{$}}},
    every row 14 column 6/.append style={postproc cell content/.style={/pgfplots/table/@cell content/.add={$\bf}{$}}},
    every row 15 column 3/.append style={postproc cell content/.style={/pgfplots/table/@cell content/.add={$\bf}{$}}},
    every row 15 column 6/.append style={postproc cell content/.style={/pgfplots/table/@cell content/.add={$\bf}{$}}},
    every row 16 column 3/.append style={postproc cell content/.style={/pgfplots/table/@cell content/.add={$\bf}{$}}},
    every row 16 column 6/.append style={postproc cell content/.style={/pgfplots/table/@cell content/.add={$\bf}{$}}},
    every row 17 column 3/.append style={postproc cell content/.style={/pgfplots/table/@cell content/.add={$\bf}{$}}},
    every row 17 column 6/.append style={postproc cell content/.style={/pgfplots/table/@cell content/.add={$\bf}{$}}},
    every row 18 column 1/.append style={postproc cell content/.style={/pgfplots/table/@cell content/.add={$\bf}{$}}},
    every row 18 column 4/.append style={postproc cell content/.style={/pgfplots/table/@cell content/.add={$\bf}{$}}},
    every row 19 column 3/.append style={postproc cell content/.style={/pgfplots/table/@cell content/.add={$\bf}{$}}},
    every row 19 column 6/.append style={postproc cell content/.style={/pgfplots/table/@cell content/.add={$\bf}{$}}},
    every row 20 column 3/.append style={postproc cell content/.style={/pgfplots/table/@cell content/.add={$\bf}{$}}},
    every row 20 column 6/.append style={postproc cell content/.style={/pgfplots/table/@cell content/.add={$\bf}{$}}},
    every row 21 column 3/.append style={postproc cell content/.style={/pgfplots/table/@cell content/.add={$\bf}{$}}},
    every row 21 column 6/.append style={postproc cell content/.style={/pgfplots/table/@cell content/.add={$\bf}{$}}},
    every row 22 column 3/.append style={postproc cell content/.style={/pgfplots/table/@cell content/.add={$\bf}{$}}},
    every row 22 column 6/.append style={postproc cell content/.style={/pgfplots/table/@cell content/.add={$\bf}{$}}},
    every row 23 column 3/.append style={postproc cell content/.style={/pgfplots/table/@cell content/.add={$\bf}{$}}},
    every row 23 column 6/.append style={postproc cell content/.style={/pgfplots/table/@cell content/.add={$\bf}{$}}},
    every row 24 column 3/.append style={postproc cell content/.style={/pgfplots/table/@cell content/.add={$\bf}{$}}},
    every row 24 column 6/.append style={postproc cell content/.style={/pgfplots/table/@cell content/.add={$\bf}{$}}},
    every row 25 column 3/.append style={postproc cell content/.style={/pgfplots/table/@cell content/.add={$\bf}{$}}},
    every row 25 column 6/.append style={postproc cell content/.style={/pgfplots/table/@cell content/.add={$\bf}{$}}},
    every row 26 column 3/.append style={postproc cell content/.style={/pgfplots/table/@cell content/.add={$\bf}{$}}},
    every row 26 column 6/.append style={postproc cell content/.style={/pgfplots/table/@cell content/.add={$\bf}{$}}},
    every row 27 column 3/.append style={postproc cell content/.style={/pgfplots/table/@cell content/.add={$\bf}{$}}},
    every row 27 column 6/.append style={postproc cell content/.style={/pgfplots/table/@cell content/.add={$\bf}{$}}},
]{figures/big-spreadsheet.tsv}
    \label{tab:results}
\end{table*}

\subsection{\sysname is Compact on Dense Graphs}\label{sec:memory_results}
Table~\ref{tab:results_space} shows our results for peak memory usage. We only show results for the standard streams, since at the peak the graph is the same for both variants of the stream.
%
The memory usage of \sysname is a function of the number of vertices in the input graph and independent of the number of edges. Thus, \sysname is extraordinarily compact on dense graphs and predictably spacious on sparse graphs. This compactness is most noticeable for the largest dense graph we tested, Kronecker-18, where only \sysname was was able to process the full stream without running out of memory.
For the  Kronecker-15, Kronecker-16, and Kronecker-17 graphs, \sysname uses less space than \dtree, \idtree, and \cdcname. On the dense graphs that could be processed, \sysname uses up to $10$ times less memory than \dtree, up to $8.4$ times less memory than \idtree, and up to $40$ times less memory than \cdcname.
On the smallest dense graph, Kronecker-13, \sysname uses less space than \dtree and \cdcname, but more space than \idtree. This indicates that an average degree $> 4000$ is necessary for \sysname to have space savings over the most space efficient lossless system.
For the reasonably large sparser graphs (average degree $\leq 100$), \sysname as expected uses more memory than the lossless systems. \idtree is consistently the most space-efficient lossless system in our experiments.
%

\subsection{\sysname Maintains Reasonable Update Speeds}
\label{sec:update_results}
\label{subsec:ingest}
In this section we analyze the update throughput performance of the four systems, shown in Table~\ref{tab:results}. We also analyze the impact of input properties such as density and frequency of \treeedge updates.
\sysname maintains reasonable update speed performance, even on dense graphs which are too large to be processed in main memory by lossless systems.

In Table~\ref{tab:results} we see that \sysname can process hundreds of thousands to millions of updates per second on the denser standard streams (the Kronecker graphs and some of the random graphs). It ingests faster than both \dtree and \cdcname on these streams, but not faster than \idtree. On sparser graph streams, \sysname's ingestion is slower but within an order of magnitude of \dtree and \cdcname. \idtree consistently ingests sparse graphs an order of magnitude faster than \dtree and \cdcname.


Next we analyze the impact of \treeedge update frequency. The fixed-forest streams are designed to have very few \treeedge (and therefore isolated) updates. 
\sysname ingests the fixed-forest streams for the Kronecker graphs about twice as fast as their corresponding standard stream.
For the fixed-forest streams of sparser graphs, \sysname benefits much more and can ingest the stream up to 65$\times$ faster than its standard version (Stanford-Web).
This indicates that the worse performance of \sysname on sparse graphs than on dense graphs is not actually due to the sparsity of the graph, but rather the high frequency of \treeedge updates in the dynamic streams generated from sparse graphs.


%
%

The \dtree, \idtree, and \cdcname systems all also perform better on the fixed-forest streams than on their standard counterpart for each graph.
\sysname is faster than \dtree on all fixed-forest inputs, including the sparse graphs.
\sysname is sometimes better than and sometimes worse than \cdcname on these inputs.
Just like on the standard streams, \idtree is consistently the fastest on fixed-forest streams.
The main difference for these inputs is the improved performance of \sysname on sparse graphs.

%
For dense graphs \sysname uses a tenth of the space of lossless systems, supports microsecond-latency queries (as we show in the next subsection) and, unlike \gibb, has high update throughput. In fact, on dense graphs it processes updates 2-8 times faster than all but one lossless system. \idtree, the only system faster than \sysname on dense graphs, is about ten times faster and uses ten times the space. \sysname's update performance is a result of its parallel algorithm's asymptotically low span and work. 


\subsection{\sysname Answers Queries Quickly}\label{sec:query_results}
%

Table~\ref{tab:results} shows that \idtree consistently has the highest query throughput, often in the tens of millions of queries per second. On all dense streams and fixed-forest sparse streams, \sysname's query throughput is lower than that of \idtree but higher than \dtree. In particular, on the Kronecker datasets \sysname performs millions of queries per second, while \dtree is slower by roughly a factor of two. On sparse standard streams, \dtree's query throughput is higher than that of \sysname by roughly an order of magnitude.




The reason for the worse performance of \sysname on the standard streams of sparse graphs is because of the \speculative buffering used by \sysname as described in Section~\ref{sec:update_buffer}--\sysname must flush the ingestion buffer to accurately answer queries.
In the fixed forest streams, \treeedge updates are very infrequent and flushing the ingestion buffer of updates is not as costly. In Appendix~\ref{app:update_buffer_experiment} we carefully study the effect of changing the update buffer size in \sysname.

The inconsistent query performance of \dtree can be explained by the fact that query speed is a function of the diameter of the spanning tree maintained.
\idtree consistently performs the best because they maintain a union-find data structure over the graph which allows for answering queries in $O(\alpha(n))$ amortized time, the inverse Ackermann function.

\section{Conclusion \locked}
In this paper we show that it is possible to ``have our cake and eat it too'' by building \sysname, a dynamic connectivity implementation that uses small space on dense graphs, answers queries with microsecond latency, and processes up to millions of updates per second.
%
%

One exciting avenue for future work is a \emph{hybrid} cutset data structure which stores vertex neighborhoods in sketch or adjacency list form, depending on the size of the neighborhood. This data structure would sketch dense graph regions, represent sparse regions as lists, and dynamically switch between the two as the graph changes.  Such a data structure could alleviate the main limitations of \sysname, which is that it sketches the entire graph regardless of density (and as a result can waste space if the graph is too sparse).

\clearpage



\bibliographystyle{abbrv}
\bibliography{main}

\clearpage
\appendix
\section{Prior Work Details}
\label{app:prelims}


We begin by summarizing prior theoretical and practical work in streaming connectivity (where the primary goal is to minimize space) and dynamic connectivity (where the primary goal is to minimize update and query time). We then describe the dynamic connectivity algorithm of Gibb~\etal~\cite{gibb2015dynamic}, which achieves both small space and low update/query time. 
We include significant technical detail in these summaries so we can compare and contrast these approaches against our own; we employ modified versions of some techniques used in these prior works in the design of our practical dense-optimized dynamic connectivity algorithm.

%

\subsection{Prior Work in Streaming Connectivity}\label{sec:streaming_model}
In the \defn{graph semi-streaming} model~\cite{semistreaming1} (sometimes just called the \defn{graph streaming} model), an algorithm is presented with a \defn{stream} of updates that define a graph. Each update is an edge insertion or deletion. The challenge in this model is to compute (perhaps approximately) some property of the graph given a single pass over the stream and memory sublinear in the size of the graph. The time required to process edge updates or compute the desired property of the graph after the stream is a second-order consideration. Graph streaming algorithms have been proposed for a multitude of problems~\cite{ChitnisCEHMMV16,KapralovLMMS13,GuhaMT15,AhnGM13,McGregorVV16,KapralovW14,McGregorTVV15,clustering}.

Specifically, stream $\graphstream$ defines a graph $\graph = (\nodes,\edges)$ with $\nodesize = |\nodes|$ and $\edgesize = |\edges|$.
Each update has the form $((u,v), \Delta)$ where $u,v \in \nodes, u \neq v$ and $\Delta \in \{-1,1\}$ where $1$ indicates an edge insertion and $-1$ indicates an edge deletion.
Let $\edges_i$ be the edge set defined by the first $i$ updates in $\graphstream$, i.e., those edges which have been inserted and not subsequently deleted. The stream may only insert edge $\edge$ at time $i$ if $\edge \notin \edges_{i-1}$, and may only delete edge $\edge$ at time $i$ if $\edge \in \edges_{i-1}$. 
Prior work~\cite{Ahn2012,tench2024graphzeppelin} on connected components in this model consider the following problem:

\begin{problem}[\textbf{The streaming connected components problem}]
Given an insert/delete edge stream $\graphstream$ that defines a graph $\graph = (\nodes, \edges)$, return a spanning forest of $\graph$.
\end{problem}

\subsubsection{Linear Sketching for Streaming Connectivity}\label{sec:streaming_solutions}
Ahn \etal~\cite{Ahn2012} initiated the field of graph sketching with their connected components sketch, which solves the streaming connected components problem in $O(\nodesize \log^3\nodesize)$ space. We refer to their algorithm as \defn{\agmname}. A key subproblem in their algorithm is \defn{$\ell_0$-sampling}: a vector $\myvec$ of length $n$ is defined by an input stream of \defn{updates} of the form $(i,\Delta)$ where value $\Delta$ is added to $\myvec_i$, and the task is to \defn{query} for a nonzero element of $\myvec$ using $o(n)$ space. They use an $\ell_0$-sampler (also called an $\ell_0$-sketch) due to Cormode \etal~\cite{cormode2014unifying} (Theorem~\ref{thm:cormodesketch}) with update and query times of $O(\log(n)\log(1/\delta))$.

Recent graph sketching implementation work introduces \cameosketch~\cite{landscape}, an $\ell_0$-sampler for vectors in $\mathbb{F}_2^n$ with improved worst-case update time $O(\log(1/\delta))$ (Theorem~\ref{thm:cameosketch}) which is $O(1)$ for a fixed constant $\delta$.
We also note that \cameosketch can be easily modified to also support sketch queries in $O(\log(1/\delta))$ time.

\begin{theorem}[Adapted from~\cite{cormode2014unifying}, Theorem 1]\label{thm:cormodesketch}
Given a 2-wise independent hash family $\hashfamily$ and an input vector $\myvec \in \mathbb{Z}^n$, there is an $\ell_0$-sampler using $O(\log^2(n)\log(1/\delta))$ bits of memory that succeeds with probability at least $1 - \delta$. 
\end{theorem}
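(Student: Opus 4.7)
The plan is to follow the standard subsample-and-recover template for $\ell_0$-sampling. First, using a 2-wise independent hash $h$ drawn from $\hashfamily$, I would assign each coordinate $i \in [n]$ a level in $\{0, 1, \ldots, \lceil \log n \rceil\}$ so that coordinate $i$ survives at level $j$ with marginal probability $2^{-j}$. At each of the $O(\log n)$ levels the algorithm maintains a 1-sparse recovery data structure that summarizes only the updates to the surviving coordinates at that level. A query scans the levels and returns a coordinate recovered at the first level where recovery succeeds.

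The 1-sparse recovery structure stores three running counters: the sum $\sigma_0 = \sum_i v_i$, the weighted sum $\sigma_1 = \sum_i i \cdot v_i$, and a polynomial fingerprint $\phi = \sum_i v_i r^i \bmod p$ for a random $r \in \mathbb{F}_p$ with $p = \mathrm{poly}(n)$. If the surviving support is exactly $\{i^*\}$, then $i^* = \sigma_1/\sigma_0$ and $\phi = \sigma_0 \cdot r^{i^*}$; if the surviving support has size at least two, the fingerprint test fails except with probability $O(n/p) = 1/\mathrm{poly}(n)$ by Schwartz--Zippel. This gives a streaming structure with $O(\log n)$ bits of state that correctly recognizes the 1-sparse case with high probability and supports insertions and deletions by simple counter updates.

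For correctness, I would argue that whenever $k = \|\myvec\|_0 \geq 1$, there is some level $j^*$ at which exactly one coordinate survives with at least constant probability. Choose $j^* = \lfloor \log k \rfloor - c$ for a small constant $c$ so the expected number of survivors lies in a constant-size interval around $1$. Under 2-wise independence, the number of survivors has variance at most its mean, so Chebyshev's inequality yields that with constant probability the surviving set is a singleton, in which case the 1-sparse recovery structure returns an $\ell_0$-sample. The main subtlety is pushing this through with only pairwise independence, since Chernoff-type concentration is unavailable; this is exactly why the isolation argument must be phrased using Chebyshev on a sum of pairwise-independent indicators. To amplify from constant success probability to $1 - \delta$, I would run $O(\log(1/\delta))$ fully independent copies of the construction in parallel and return the sample from any copy that reports success.

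The space analysis is then routine: each level uses $O(\log n)$ bits for the three counters and fingerprint, and there are $O(\log n)$ levels per copy, for $O(\log^2 n)$ bits; the $O(\log n)$-bit hash seeds and fingerprint randomness add only lower-order terms. Multiplying by the $O(\log(1/\delta))$ independent copies used for amplification gives the claimed $O(\log^2(n)\,\log(1/\delta))$ total bit complexity, completing the construction.
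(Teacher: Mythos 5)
The paper does not prove this statement at all: it is imported verbatim (``Adapted from Cormode et al., Theorem 1'') and used as a black box, so there is no internal proof to compare against. Your reconstruction is the standard subsample-and-recover construction and is essentially the right argument for the cited result: geometric level subsampling with a pairwise-independent hash, a $1$-sparse recovery gadget per level built from $\sum_i v_i$, $\sum_i i\,v_i$, and a Schwartz--Zippel fingerprint, and $O(\log(1/\delta))$ independent repetitions to amplify a constant per-copy success probability, which is exactly how one gets the $O(\log^2(n)\log(1/\delta))$ bit bound from a merely $2$-wise independent family. The one step that does not go through as written is the isolation argument: Chebyshev's inequality bounds $\Pr[\,|X-\mu|\ge t\,]$ and, for the small constant means $\mu$ you are forced to use here, it is vacuous and in any case only confines $X$ to an interval around $\mu$ rather than certifying $X=1$. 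The standard elementary fix under pairwise independence is the inclusion--exclusion (Bonferroni) bound
$\Pr[X=1]\;\ge\;\sum_i p_i-2\sum_{i<j}p_ip_j\;\ge\;\mu-\mu^2$,
which is a positive constant for, say, $\mu=1/2$; substituting this for the Chebyshev step makes your argument complete. Two further points worth a sentence each in a full write-up: the counters need the stream's coordinate values to stay $\mathrm{poly}(n)$-bounded for the $O(\log n)$ bits per counter to hold, and an $\ell_0$-sampler is usually required to return a (near-)uniform element of the support, which your argument does not address---though for the cut-edge application in this paper any support element suffices.
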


\begin{theorem}[Adapted from ~\cite{landscape}, Theorem 4.2]\label{thm:cameosketch}
\cameosketch is an $\ell_0$-sampler that, for vector $x \in \mathbb{F}_2^n$, uses $O(\log^2(n)\log(1/\delta))$ bits of memory, has worst-case update time $O(\log(1/\delta))$, and succeeds with probability at least $1-\delta$.
\end{theorem}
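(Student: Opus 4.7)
The plan is to follow the standard geometric-subsampling template for $\ell_0$-sampling, but exploit the $\mathbb{F}_2$ structure to collapse the per-update cost. First I would set up $O(\log(1/\delta))$ independent repetitions of a base sampler, each of which partitions the coordinates into $L = O(\log n)$ levels using a pairwise-independent hash $h : [n] \to \{1, \dots, L\}$ where $\Pr[h(i) = \ell] = 2^{-\ell}$ (so level $\ell$ keeps each coordinate with probability $2^{-\ell}$). At each level I would maintain a constant-size $1$-sparse recovery structure over $\mathbb{F}_2$: two XOR-fingerprints (a sum of indices and a sum of $\phi(i)$ for a pairwise-independent $\phi : [n] \to [\poly(n)]$) suffice to detect whether exactly one coordinate is ``live'' at that level and to report it. Crucially, since we are over $\mathbb{F}_2$, each update $i$ is an XOR, and it only touches the single level $h(i)$, not all $L$ levels as in the Cormode et al.\ construction.

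Next I would carry out the three separate bookkeeping arguments. For space, each repetition stores $L = O(\log n)$ level-sketches of $O(\log n)$ bits, giving $O(\log^2 n)$ bits per repetition and $O(\log^2(n)\log(1/\delta))$ bits total, matching the claim. For update time, an update affects one level per repetition, and updating a constant-sized XOR fingerprint is $O(1)$; summed across repetitions this is $O(\log(1/\delta))$ worst-case, which is the key improvement. For query time, I would scan the $O(\log n)$ levels of one repetition, testing the $1$-sparse predicate at each; repeating across repetitions gives $O(\log n \log(1/\delta))$ worst case (or, with a pre-computed pointer to the likely ``good'' level, $O(\log(1/\delta))$).

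For correctness, I would argue as in the standard analysis: let $s = \|\myvec\|_0$ be the current support size, and consider the level $\ell^*$ with $2^{\ell^*} = \Theta(s)$. A Chebyshev argument on the pairwise-independent $h$ shows that with constant probability the number of live coordinates at level $\ell^*$ is exactly $1$, and conditioned on this the $\mathbb{F}_2$ fingerprint deterministically recovers the index while the $\phi$-fingerprint rejects false positives at higher load levels with high probability. Running $O(\log(1/\delta))$ independent copies and returning the first successful one boosts the success probability to $1 - \delta$.

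The main obstacle is the $1$-sparse recovery step over $\mathbb{F}_2$: unlike the integer setting of Cormode et al., we cannot distinguish ``$1$-sparse'' from ``odd-sparse'' using a weight counter, because XOR hides cancellations. So the heart of the proof is showing that a short auxiliary fingerprint (say $O(\log n)$ bits from a pairwise-independent $\phi$) lets us certify true $1$-sparsity with failure probability $O(1/n^c)$ and then union-bound this guarantee across all $L \cdot O(\log(1/\delta))$ level-sketches without inflating the final failure probability beyond $\delta$. Once this certification lemma is established, the rest of the analysis plugs into the template above.
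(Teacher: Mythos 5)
This theorem is stated in the paper as an import from prior work (Landscape, Theorem~4.2) and is given no proof there, so there is no in-paper argument to compare against; what you have written is a reconstruction of the \cameosketch design itself. At that level your plan matches the actual construction: $O(\log(1/\delta))$ independent repetitions, $O(\log n)$ geometric levels each holding a constant number of $O(\log n)$-bit XOR fingerprints, each update touching one level per repetition (the published version achieves this by writing only to the deepest level a coordinate belongs to and recovering the nested buckets by suffix-XOR at query time, which is equivalent to your partition formulation for the Chebyshev argument), and a constant per-repetition success probability boosted to $1-\delta$. The space, update-time, and amplification bookkeeping all check out.

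The one genuine gap is the piece you yourself defer: the $1$-sparsity certification over $\mathbb{F}_2$. As stated, your claim that a \emph{pairwise-independent} $\phi$ suffices is not justified. The event you must control is, for an arbitrary live set $S$ with $|S|\neq 1$, that $\bigoplus_{i\in S}\phi(i)$ equals $\phi\bigl(\bigoplus_{i\in S} i\bigr)$ (or whatever consistency predicate you test); pairwise independence gives you no handle on the distribution of an XOR of three or more hash values, so the false-positive probability does not follow from a Chebyshev- or collision-style argument. The actual construction uses a checksum with an algebraic structure chosen so that this probability is $O(1/\mathrm{poly}(n))$ for every fixed $S$, after which the union bound over the $O(\log n\cdot\log(1/\delta))$ buckets goes through as you describe. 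So the skeleton is right, but the lemma you correctly identify as ``the heart of the proof'' is exactly the part that cannot be waved through with the independence assumption you propose; to complete the argument you would need to either strengthen $\phi$ (e.g., a polynomial/Reed--Solomon-style fingerprint or full independence per bucket) or prove the false-positive bound for the specific checksum used.
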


We denote the $\ell_0$-sketch of a vector $\myvec$ as $\lsketch(\myvec)$.  The sketch of a vector is a linear function, that is for any vectors $\myvec$ and $\myothervec$, $\lsketch(\myvec) + \lsketch(\myothervec) = \lsketch(\myvec+\myothervec)$.
In this paper, we will abstract away most of the details of $\ell_0$-sketching algorithms, but note that any sketch $\lsketch(\myvec)$ is laid out as an array of $O(\log(n)\log(1/\delta))$ machine words in RAM, and to compute $\lsketch(\myvec) + \lsketch(\myothervec)$ we simply perform elementwise XOR. Therefore sketch addition takes $O(\log(n)\log(1/\delta))$ time.

\myparagraph{The \Agmname Algorithm}
The main idea of Ahn \etal's algorithm is to sketch each vertex neighborhood, represented in a specific vector format (called the characteristic vector) such that the sum of neighborhood vectors for vertices $\nodesubset\subseteq\nodes$ encodes precisely the set of edges in the cut $(\nodesubset, \nodes \setminus \nodesubset)$.
Ahn \etal define $\charvec_u \in \mathbb{F}_2^{\binom{\nodesize}{2}}$ as the \defn{characteristic vector of a vertex} $u$, where each nonzero element of the vector $\charvec_u$ denotes an edge incident to $u$.
That is for all $0 \leq v < \nodesize$, $\charvec_u[(u,v)] = 1$ if $(u, v) \in \edges$, $0$ otherwise.
%
Summing characteristic vectors for vertices in a subset $\nodesubset$ cancels out edges internal to $\nodesubset$: for $(u,v)\in\edges$, $\charvec_u[(u,v)]+\charvec_v[(u,v)]=0$.

\agmname maintains $O(\log \nodesize)$ sketches of the characteristic vector $\charvec_u$ for each vertex $u \in \nodes$.
%
When an update $((u,v),\Delta)$ is processed, the algorithm simply updates all of the $O(\log\nodesize)$ $\ell_0$-sketches for $u$ and for $v$.
Using the $\ell_0$-sketch of Cormode \etal~\cite{cormode2014unifying} gives updates in $O(\log^2(\nodesize)\log(1/\delta))$ time. Using \cameosketch~\cite{landscape} as the $\ell_0$-sketch allows for updates in $O(\log(\nodesize)\log(1/\delta))$ time.

To find the connected components of $\graph$, the algorithm runs Boruvka's algorithm for connected components of a graph. To recover edges incident to a supernode, the algorithm queries the sketch for that supernode. When merging supernodes, their sketches are merged as well.
%
%
%
Ahn \etal show that, for $\delta = 1/100$, the correct connected components will be found in $O(\log\nodesize)$ rounds w.h.p. and thus maintaining $O(\log\nodesize)$ sketches is sufficient for the algorithm to be correct w.h.p (which here and elsewhere in the paper means with probability at least $1 - 1/\nodesize^c$ for some constant $c$). 
%
Since the sketches for any vertex have total size $O(\log^3\nodesize)$ bits, the entire data structure has size $O(\nodesize\log^3\nodesize)$ bits.

\myparagraph{Limitations of \Streamingname}
A limitation of \agmname is its performance in settings such as the dynamic graph model, where frequent connectivity queries occur in between the stream updates.
In order to answer such queries, the \agmname algorithm first finds the connected components of $\graph$ in $O(\nodesize \log^2 \nodesize)$ time and then checks whether $u$ and $v$ are in the same component.


\subsection{Prior Work in Dynamic Connectivity}\label{sec:dynamic_model}
In the \defn{dynamic graph} model, the challenge is to efficiently compute a specific structural property of a changing graph.
Specifically, an algorithm is presented with a sequence of updates (each an edge insertion or deletion). Unlike the semi-streaming model, in the dynamic model the algorithm may be asked to answer queries about the desired property of the current graph at any moment within the sequence of updates.

Specifically the algorithm is given a sequence of updates $\graphstream$ which must be processed in order.
Any prefix of the first $i$ updates in $\graphstream$ defines a graph $\graph_i = (\nodes_i,\edges_i)$.
Immediately after processing the $i$-th update , one or more queries may be issued and the algorithm must compute the answers for graph $\graph_i$ before receiving the  $i+1$-th update.
The challenge in this model is to process updates and answer queries as quickly as possible. The space usage of the algorithm is a second-order consideration.
In the dynamic connectivity problem, the structural property of the graph of interest is connectivity. After processing update $i$, an algorithm must answer connectivity queries $\connected(u, v)$ on vertices $u$ and $v$ in $\nodes$, which return whether there exists a path between $u$ and $v$ in $\graph_i$.

\begin{problem}[\textbf{The fully-dynamic connectivity problem}]



Given a sequence $\graphstream$ of edge insertions or deletions which define a changing graph $\graph$, maintain a data structure which processes one update of $\graphstream$ at a time, and can answer any connectivity query on $\graph$ after each update.
\end{problem}

\subsubsection{Lossless Dynamic Connectivity Algorithms}\label{app:dynamic_solutions}

Henzinger and King~\cite{henzinger1995randomized} produced the first dynamic connectivity algorithm with poly-logarithmic (amortized) update and query time.
This inspired many other algorithms, all of which maintain a lossless representation of the graph as part of their data structure. We refer to this class of algorithms as \defn{\losslessname dynamic connectivity algorithms}.
%
All of these algorithms maintain a spanning forest of the graph, assign a level to each edge, and store each edge explicitly. When a replacement edge must be found, the algorithm may search through many edges, reducing the level of those that are not a valid replacement edge. Since each edge can only be pushed down a limited number of times, these algorithms achieve poly-logarithmic amortized update time.

\myparagraph{Limitations of \Losslessname Algorithms}
%
Explicitly storing all of the edges requires a prohibitive amount of space for dense graphs.
The space usage of \losslessname dynamic connectivity algorithms is typically $O(\nodesize \log \nodesize + \edgesize)$ or $O(\nodesize + \edgesize)$ words (where a word is usually $O(\log \nodesize)$ bits), which can be asymptotically larger than the space usage of \agmname for a sufficiently dense graph.
%
Another limitation is that although these algorithms have amortized poly-logarithmic update time guarantees, they still may have to search through a large amount of edges in the worst case to find a replacement edge, which requires traversing a component and possibly searching all of the edges coming out of it.

\subsection{Detailed Explanation of \Gibb}\label{app:gibb}
Algorithm~\ref{alg:gibb_update} shows the pseudo-code for edge insertions and deletions in \gibb~\cite{gibb2015dynamic}.
Figure~\ref{fig:gibb_update} shows an example of an edge insertion.

\begin{figure*}[ht]
    \centering
    \includegraphics[width=\linewidth]{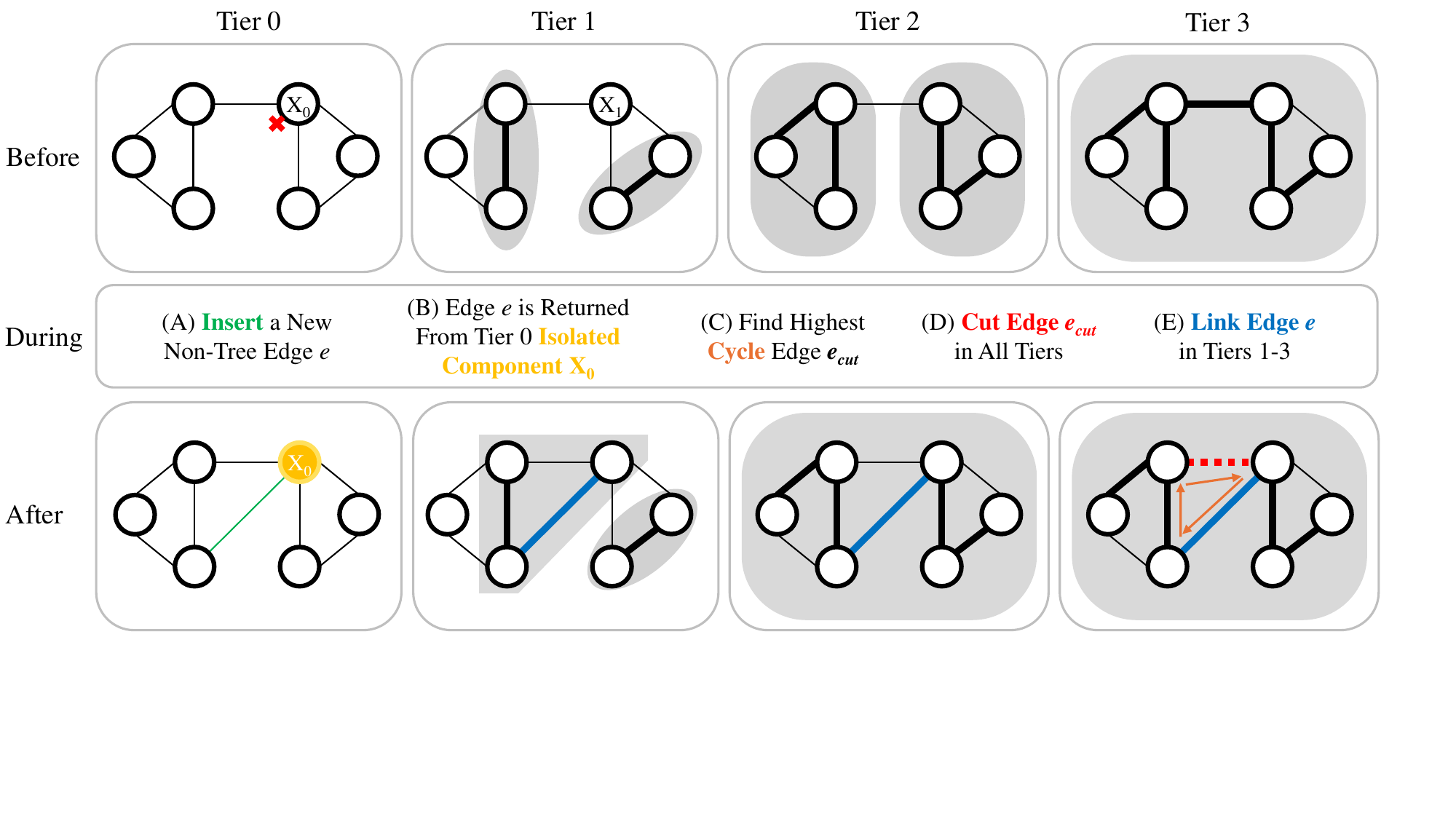}
    \caption{\small
    An example of an edge insertion in \gibb that causes an isolated component. Thick edges represent the tree edges in each tier, while thin edges represent the non-tree edges. Shaded gray regions represent components of at least two nodes. 
    Although $X_0 = X_1$ before the update, component $X_0$ is not isolated because its sketch failed to return an edge. Adding the new edge (the green edge in tier 0) causes component $X_0$ to be isolated ($X_0$ now randomly returns an edge), and the green edge is recovered from the sketch of $X_0$. 
    Making this edge a tree edge would cause a cycle at the top tier, so the lowest edge in the cycle (the dotted red line) is removed and the newly sampled edge (solid dark blue line) is added at tiers 1-3.
    }
    \label{fig:gibb_update}
\end{figure*}

\begin{algorithm}[ht]
\caption{$\mathsf{Update}(e=(u,v),\text{bool Deletion})$}
\label{alg:gibb_update}
\begin{algorithmic}[1]
    \For {$i \in [0,top]$} 
        \State $\cutset_i.\update(e)$ \label{line:gibb_update_sketch}
        \If {$\text{Deletion} \land e \in \edges_{\forest_i}$} $\cutset_i.\cut(u,v)$ \EndIf \label{line:gibb_update_cut}
    \EndFor
    \For {$i \in [0,top-1]$}\label{line:gibb_level_loop}
    \For {$w \in \{u,v\}$}
        \State $\component_i \gets$ level $i$ component containing $w$
        \State $\component_{i+1} \gets $ level $i+1$ component containing $w$
        \State $(u_{link}, v_{link}) \gets \cutset_i.\query(w)$
        \If{$((u_{link}, v_{link}) \neq \emptyset) \land (\component_i = \component_{i+1})$} \label{line:gibb_link_check}
            \If{$\tree.\connected(u_{link}, v_{link})$} \label{line:gibb_cut_start}
                \State ($u_{cut},v_{cut},\ell) \gets \tree.\pathquery(u_{link},v_{link})$
                \For {$j \in [\ell,top]$} $\cutset_j.\cut(u_{cut},v_{cut})$ \EndFor
            \EndIf \label{line:gibb_cut_end}
            \For {$j \in [i+1,top]$} $\cutset_j.\link(u_{link},v_{link})$ \EndFor \label{line:gibb_link}
        \EndIf
    \EndFor
    \EndFor
\end{algorithmic}
\end{algorithm}

First for each level $i$, $0 \leq i \leq top$, $\cutset_i.\update(e)$ is called (line~\ref{line:gibb_update_sketch}). If the update is an edge deletion and $e \in \edges_{\forest_i}$ then $\cut(u,v)$ is called on $\cutset_i$ (line~\ref{line:gibb_update_cut}).
Throughout the algorithm, whenever a $\link$ or $\cut$ occurs in $\cutset_{top}$, the same operation is done in $\tree$ (omitted from pseudo-code).

\myparagraph{Finding Isolated Components}
Next the algorithm will check for components that violate Invariant~\ref{inv:gibb3}, starting at level $0$ and going up through all the levels. Such components are referred to as \defn{isolated components}.
Let $\component_i$ indicate a component in $\forest_i$, and $\component_{i+1}$ indicate the component in $\forest_{i+1}$ that contains $\component_i$.
$\component_i$ may become isolated if (1) an edge in $\component_{i+1}$ was deleted, (2) $\component_i$ now successfully returns an edge when it previously didn't (e.g. a sketch previously failed to return an edge but now does due to a sketch update), or (3) $\component_i$ was modified (e.g. from the linking and cutting described next).
Importantly, a component can only become isolated due to an update if it contains $u$ or $v$.
Checking if a component is isolated can be done by querying $\cutset_i$ with vertices $u$ and $v$. If the query is successful (returning an edge $e_{link}=(u_{link},v_{link})$) and $\component_{i+1}$ is equal to $\component_i$ (equality can be checked by comparing the size of the components which can be obtained from the dynamic tree data structure), then $\component_i$ is isolated (line~\ref{line:gibb_link_check}).
Then $\link(u_{link},v_{link})$ should be called at all levels above $i$ to restore Invariant~\ref{inv:gibb3} while also maintaining Invariant~\ref{inv:gibb2} (line~\ref{line:gibb_link}).

\myparagraph{Removing a Cycle Edge}
It may not be possible to link the edge $\edge_{link}$ in $\forest_j$ if $u_{link}$ and $v_{link}$ are already connected in $\forest_j$, because then this would form a cycle. So it is necessary to first find the minimum level $\ell > i$ where $u_{link}$ is connected to $v_{link}$, and disconnect them in all levels $\geq \ell$ (lines~\ref{line:gibb_cut_start}--\ref{line:gibb_cut_end}).
The lowest level where two vertices are connected corresponds to the maximum weight edge on their path in $\tree$.
To find this level (and a corresponding edge to cut), the algorithm calls $\tree.\pathquery(u_{link},v_{link})$.

\myparagraph{Analysis}
The authors~\cite{gibb2015dynamic} prove that for a single update, the invariants are maintained w.h.p. Thus the answers to connectivity queries are correct w.h.p. across a polynomial number of updates.
Connectivity queries can be answered by checking if $u$ and $v$ are in the same component in $\tree$ or $\forest_{top}$ in $O(\log \nodesize)$ time.
Queries can also be improved to $O(\log \nodesize / \log\log \nodesize)$ time without changing the update complexity by maintaining a separate dynamic tree with $\Theta(\log \nodesize)$ fan-out~\cite{gibb2015dynamic}.

Now we will analyze the performance of normal and isolated updates. In \gibb, both normal and isolated updates on edge $e$ first call $\update(e)$ at each of the $O(\log \nodesize)$ levels. Due to Lemma~\ref{lem:cutset}, this part takes $O(\log^3 \nodesize)$ time.
In a normal update, the remainder of the update algorithm simply calls $\query(u)$ and $\query(v)$ on each cutset data structure over the $O(\log \nodesize)$ levels, for a total of $O(\log^2 \nodesize)$ time.
Now consider an isolated update. In the worst case, processing each level (each iteration of the main loop on line~\ref{line:gibb_level_loop} of the pseudo-code) may cause all levels above it to perform a link and/or cut of the cutset data structure.
A link or cut in a cutset takes $O(\log^2 \nodesize)$ time (Lemma~\ref{lem:cutset}), so across all $O(\log \nodesize)$ levels this results in a worst-case update time of $O(\log^4 \nodesize)$.
This yields the following lemma analyzing \gibb:
\begin{lemma}
    \gibb performs isolated updates in $O(\log^4 \nodesize)$ time and normal updates in $O(\log^3 \nodesize)$ time.
\end{lemma}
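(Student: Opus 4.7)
The plan is to bound the work of the two cases separately by carefully accounting for each type of cutset operation invoked in Algorithm~\ref{alg:gibb_update}, using the per-operation costs from Lemma~\ref{lem:cutset} ($\update$ in $O(\log^2 \nodesize)$, $\link$ and $\cut$ in $O(\log^2 \nodesize)$, and $\query$ in $O(\log \nodesize)$).

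First I would handle the initial phase that is common to both cases. The loop on lines that precede the main level loop calls $\cutset_i.\update(e)$ once per level, plus at most one $\cutset_i.\cut(u,v)$ in the case of a spanning-forest deletion. Since $top = O(\log \nodesize)$ and each call costs $O(\log^2 \nodesize)$, this phase takes $O(\log^3 \nodesize)$ time. I would note that the $\cut$ possibly triggered here is charged to the isolated case (since a $\cut$ is, by definition, a structural change); in a normal update only the $\update$ calls fire, still giving $O(\log^3 \nodesize)$.

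Next I would analyze normal updates. By definition no $\link$ or $\cut$ operation is induced in any cutset, so in the main loop \liref{gibb_level_loop} the conditional guarding lines \liref{gibb_cut_start}--\liref{gibb_link} never executes a $\link$ or $\cut$; only the two $\cutset_i.\query(w)$ calls run. That is $O(\log \nodesize)$ levels $\times$ $O(1)$ queries per level $\times$ $O(\log \nodesize)$ per query $= O(\log^2 \nodesize)$, which is dominated by the initial $O(\log^3 \nodesize)$. Hence normal updates run in $O(\log^3 \nodesize)$ time in total.

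Finally, I would handle the isolated case by giving a worst-case charging over the main loop. Each of the $O(\log \nodesize)$ iterations of the outer loop may, for each of $u$ and $v$, trigger one $\tree.\pathquery$ plus two parallel-style loops that execute at most $O(\log \nodesize)$ cutset $\cut$ operations and $O(\log \nodesize)$ cutset $\link$ operations (one per level from some starting level up to $top$). Thus across the whole algorithm we incur at most $O(\log^2 \nodesize)$ cutset $\link$/$\cut$ operations, each of cost $O(\log^2 \nodesize)$, for a total of $O(\log^4 \nodesize)$. The $O(\log \nodesize)$ queries and $\tree.\pathquery$ calls (each $O(\log \nodesize)$ in a standard dynamic tree) contribute only lower-order terms, and the initial phase contributes $O(\log^3 \nodesize)$. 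Summing gives $O(\log^4 \nodesize)$, establishing the lemma.

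The main obstacle is simply getting the charging argument right: being explicit that in the worst case each level of the main loop can cascade a full $O(\log \nodesize)$-level chain of $\link$s (and $\cut$s for the induced cycle), so the total number of structural cutset operations is $O(\log^2 \nodesize)$ rather than $O(\log \nodesize)$; once that is spelled out, multiplying by the $O(\log^2 \nodesize)$ per-operation cost from Lemma~\ref{lem:cutset} finishes the bound with no further calculation.
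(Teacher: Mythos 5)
Your proposal is correct and follows essentially the same argument as the paper: both bound the common $\update$ phase by $O(\log \nodesize)$ levels times the $O(\log^2 \nodesize)$ per-level cost from Lemma~\ref{lem:cutset}, observe that normal updates add only $O(\log^2 \nodesize)$ of query work, and charge isolated updates for $O(\log \nodesize)$ cascading $\link$/$\cut$ operations per main-loop iteration at $O(\log^2 \nodesize)$ each. No meaningful differences from the paper's proof.
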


\section{Data Structures That We Use}

\subsection{Dynamic Trees}\label{app:dynamic_trees}
Much prior work in dynamic connectivity (as well as the algorithms in this paper) makes use of a class of data structure called a \defn{dynamic tree}~\cite{sleator1983data,frederickson1985data,frederickson1997ambivalent,frederickson1997data,henzinger1995randomized,acar2004dynamizing,acar2005experimental}. We summarize them here.
First introduced by Sleator and Tarjan~\cite{sleator1983data}, the dynamic trees problem is to maintain a data structure representing a forest subject to updates (edge insertions and deletions) on a fixed set of vertices.
The most common type of query that all known dynamic tree data structures can answer is whether or not two vertices in the forest are connected.
At its core, we define the following interface for the dynamic trees problem on a forest $\forest$:
\begin{itemize}[topsep=0pt,itemsep=0pt,parsep=0pt,leftmargin=15pt]
    \item $\link(u,v)$: Given two vertices $u$ and $v$ in $\forest$ that are not connected, insert the edge $(u,v)$ into $\forest$.
    \item $\cut(u,v)$: Given two vertices $u$ and $v$ in $\forest$ with an edge between them, delete the edge $(u,v)$ from $\forest$.
    \item $\connected(u,v)$: Given two vertices $u$ and $v$ in $\forest$, determine whether they are connected by a path in $\forest$.
\end{itemize}
Dynamic trees may also support more complex queries about forests with edge or vertex values in some domain $\domain$ and some commutative and associative \defn{aggregate function} $g : \mathcal{D}^2 \rightarrow \mathcal{D}$.
We formally define two additional types of queries:
\begin{itemize}[topsep=0pt,itemsep=0pt,parsep=0pt,leftmargin=15pt]
    \item $\pathquery(u,v)$: Given two connected vertices $u$ and $v$ in $\forest$, compute $g$ over the edges values on the path from $u$ to $v$.
    \item $\subtreequery(v)$: Given a vertex $v$ in $\forest$, return the compute $g$ over the vertex values in the component containing $v$.
\end{itemize}
%
In this paper, we use Euler tour trees~\cite{henzinger1995randomized} and link-cut trees~\cite{sleator1983data} because they support subtree queries and path queries respectively.
All operations are in logarithmic time.
Other examples of dynamic trees include topology trees~\cite{frederickson1985data,frederickson1997ambivalent,frederickson1997data}, and rake-compress trees~\cite{acar2004dynamizing,acar2005experimental}.

\subsection{Skip Lists}\label{app:skiplist}

Skip-lists are an ordered sequence data structure that use probabilistic balancing. It consists of several levels of linked lists where the bottom level contains every element in the data structure, and each element with a node at some level $i$ has a node in level $i+1$ independently with constant probability $p \in (0,1)$ (typically $1/2$).

\myparagraph{Reduced Height Skip List}
In a traditional skip-list the height of each element is a random variable $X$ chosen independently from a geometric distribution $X \sim \geo(1/2)$.
A reduced height skip-list uses a non-constant probability $1 / \log \nodesize$ so that the height of each element is a random variable $X'$ independently chosen from a geometric distribution $X' \sim \geo(1 - 1 / \log \nodesize)$. The following two lemmas prove bounds on the height and length of paths in the data structure. We omit the proofs due to space constraints but they follow from a basic probabilistic analysis.

\begin{lemma}
    The maximum height of an element in a reduced height skip-list with $\nodesize$ elements is $O(\log \nodesize / \log\log \nodesize)$ w.h.p.
\end{lemma}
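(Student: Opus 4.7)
The plan is to prove the bound via a straightforward union bound over the $\nodesize$ elements, using the tail of the geometric distribution.

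First I would compute the tail of the height of a single element. With $p = 1 - 1/\log\nodesize$, an element reaches level $h$ only if it ``survived'' $h-1$ independent Bernoulli trials, each with success probability $1/\log\nodesize$. Therefore, for any element $v$,
\[
\Pr[\text{height}(v) \geq h] = \left(\frac{1}{\log\nodesize}\right)^{h-1}.
\]

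Next I would union-bound over all $\nodesize$ elements:
\[
\Pr[\max_v \text{height}(v) \geq h] \leq \nodesize \cdot \left(\frac{1}{\log\nodesize}\right)^{h-1}.
\]
Setting $h = (c+1)\log\nodesize / \log\log\nodesize + 1$ for an arbitrary constant $c > 0$, the right-hand side becomes at most $\nodesize \cdot \nodesize^{-(c+1)} = \nodesize^{-c}$, giving the desired w.h.p.\ bound of $O(\log\nodesize/\log\log\nodesize)$.

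The argument is entirely routine; the only subtle point is being explicit about which convention for the geometric distribution is used (so that the tail $(1/\log\nodesize)^{h-1}$ is correctly identified), and that $\log\log\nodesize$ is well-defined and positive for $\nodesize$ sufficiently large. No concentration inequality beyond a union bound is needed, since the geometric tail is already sharp enough for this purpose.
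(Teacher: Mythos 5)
Your proof is correct and follows essentially the same route as the paper's: compute the geometric tail $\Pr[\mathrm{height} \geq h]$, observe that $(1/\log \nodesize)^{\Theta(\log \nodesize/\log\log\nodesize)} = \nodesize^{-\Theta(1)}$, and union-bound over the $\nodesize$ elements. Your version is in fact slightly cleaner in that it states the geometric-distribution convention explicitly, whereas the paper's writeup contains a small typo in the distribution's parameter.
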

\begin{proof}
    Consider a single element in the reduced height skip-list whose height is a random variable $X' \sim \geo(1 / \log\log \nodesize)$.
    \begin{align*}
    \prob{X' \geq c' \frac{\log \nodesize}{\log\log \nodesize}} = (1/\log \nodesize)^{c' \frac{\log \nodesize}{\log\log \nodesize}} = 1/\nodesize^{c'}
    \end{align*}
    Let $h$ be the maximum height of any element. Applying the union bound over the $\nodesize$ elements and using $c' = c+1$:
    \[ \prob{h \geq c' \log \nodesize / \log\log \nodesize} \leq \nodesize \cdot 1/\nodesize^{c+1} = 1/\nodesize^c \]
    Thus $h = O(\log \nodesize / \log\log \nodesize)$ with probability at least $1-1/\nodesize^c$.
\end{proof}

\begin{lemma}
    The number of nodes on each search path in a reduced height skip-list with $\nodesize$ elements is $O(\log^2 \nodesize)$ w.h.p.
\end{lemma}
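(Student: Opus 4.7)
The plan is to carry out the standard backward analysis of a skip-list search, adapted to the reduced promotion probability $q = 1/\log \nodesize$. Start from the target element at level $0$ and reverse the search path: at each step, move up one level if the current element is present at the next higher level, otherwise move left to its predecessor at the current level. Because promotion choices are mutually independent across elements and levels, each backward step is an up-move independently with probability $q$, and the walk terminates once it reaches the top of the skip list.

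First I would invoke the previous lemma to bound the maximum height by $H = O(\log \nodesize / \log\log \nodesize)$ with probability at least $1 - \nodesize^{-c_1}$. Conditioning on this event, the backward walk makes exactly $H$ up-moves before hitting the top, so it suffices to upper-bound the number of steps required to collect $H$ up-moves in an i.i.d.\ sequence of $\mathrm{Bernoulli}(q)$ trials.

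Next I would fix $T := c \log^2 \nodesize$ for a constant $c$ chosen below and apply a lower-tail Chernoff bound to the sum of $T$ independent $\mathrm{Bernoulli}(q)$ random variables. The expectation is $Tq = c \log \nodesize$, so the probability of collecting fewer than $c \log \nodesize / 2$ up-moves in $T$ steps is at most $\exp(-c \log \nodesize / 8) \leq \nodesize^{-c_2}$; choosing $c$ sufficiently large makes $c \log \nodesize / 2 \geq H$ and makes $c_2$ as large as desired. A union bound over the two failure events then gives that the backward walk, and hence the (forward) search path, visits $O(\log^2 \nodesize)$ nodes w.h.p.

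The main subtlety is simply coordinating the constants from the height bound and from the Chernoff tail so that both hold with probability $1 - \nodesize^{-c}$ simultaneously; there is no deeper obstacle because this is essentially Pugh's classical backward analysis, stretched by a factor of $1/q = \log \nodesize$ relative to the standard $p = 1/2$ case.
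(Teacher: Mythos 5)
Your proposal is correct and follows essentially the same route as the paper: a backward traversal of the search path modeled as i.i.d.\ $\mathrm{Bernoulli}(1/\log \nodesize)$ coin flips, bounding the number of flips needed to accumulate $O(\log \nodesize/\log\log\nodesize)$ up-moves (the paper uses a direct binomial tail estimate where you use a multiplicative Chernoff bound, which is an immaterial difference). The only thing to add is a final union bound over all $\nodesize$ search paths so that the $O(\log^2 \nodesize)$ bound holds for \emph{each} path simultaneously, which your ``choose $c$ sufficiently large'' framework absorbs without difficulty.
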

\begin{proof}
    Consider traversing a given search path in reverse, starting from the bottom-level node. At any point the path travels up with probability at least $1 / \log \nodesize$, otherwise it travels left.
    The question is how many nodes are traversed before the top level is reached (e.g. the path travels up $c \log \nodesize / \log\log \nodesize$ times). This question can be thought of as flipping a biased coin where heads means the path travels up and tails means it travels left.
    We show that in $6c \log^2 \nodesize$ flips of a biased coin with probability $1 / \log \nodesize$ of heads, there will be at least $c \log \nodesize / \log\log \nodesize$ heads with probability at least $1-1/\nodesize^c$:
    \begin{align*}
    &\prob{\text{In } 6c \log^2 \nodesize \text{ coin flips there's }\leq c \frac{\log \nodesize}{\log\log \nodesize} \text{ heads}} \\
    &\leq \binom{6c \log^2 \nodesize}{c\log \nodesize / \log\log \nodesize}\left(1-\frac{1}{\log \nodesize}\right)^{6c \log^2 \nodesize - c \frac{\log \nodesize}{\log\log \nodesize}} \\
    &\leq \left(\frac{6ec \log^2 \nodesize}{c \log \nodesize / \log\log \nodesize}\right)^{c \frac{\log \nodesize}{\log\log \nodesize}}\left(\frac{1}{e}\right)^{6c\log \nodesize - c \frac{1}{\log\log \nodesize}} \\
    &\leq (6e\log \nodesize)^{c \frac{\log \nodesize}{\log\log \nodesize}}\left(\frac{1}{e}\right)^{2c\log \nodesize+3c \frac{\log \nodesize}{\log\log \nodesize}} \\
    &\leq \left(\frac{6e\log \nodesize}{e^3}\right)^{c \frac{\log \nodesize}{\log\log \nodesize}}\left(\frac{1}{e}\right)^{2c\log \nodesize}\leq \nodesize^c\left(\frac{1}{\nodesize^{2c}}\right) \leq \frac{1}{\nodesize^c}
    \end{align*}
    Using the union bound over the $\nodesize$ search paths, the probability that any search path is longer than $6c\log^2\nodesize$ is $1/\nodesize^{c-1}$, thus the number of nodes on each search path is $O(\log^2 \nodesize)$ w.h.p.
\end{proof}

To take advantage of the reduced height property it is necessary to maintain pointers from each skip-list node to its \defn{parent}. We define the parent of a level $i$ skip-list node $w$ as the rightmost level $i+1$ node whose corresponding level $i$ node is to the left of $w$. In other words, the parent is the last level $i+1$ node on any search path that passes through $w$.
It is easy to see that finding the root of the reduced height skip-list containing an element can be done in $O(\log \nodesize / \log\log \nodesize)$ time w.h.p. if parent pointers are maintained.
Maintaining parent pointers during splits or joins does not affect the overall cost.

Prior work~\cite{tseng2019batch} describes how to augment skip-lists with values in some domain $\domain$ and a commutative and associative function $g : \domain^2 \rightarrow \domain$.
The same techniques may be applied to augment a reduced height skip-list. As a result we get the following lemma:

\begin{lemma}
    There exists an augmented dynamic tree data structure that performs root finding in $O(\log \nodesize / \log\log \nodesize)$ time w.h.p., performs augmented value updates in $O(w \cdot \log \nodesize / \log\log \nodesize)$ time w.h.p. where $w$ is the time to update one augmented value, and performs links and cuts in $O(y \cdot \log^2 \nodesize)$ time w.h.p. where $y$ is the time for one computation of the aggregate function over augmented values.
\end{lemma}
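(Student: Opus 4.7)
The plan is to derive each of the three bounds from the height bound $O(\log \nodesize / \log\log \nodesize)$ and the search-path length bound $O(\log^2 \nodesize)$ w.h.p.\ already established for the reduced height skip list, combined with the standard augmented-skip-list technique of maintaining aggregate values along ancestor chains.

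For root finding, I would start from any node and repeatedly follow parent pointers until the top level is reached. Since each hop advances exactly one level, the total cost equals the height, which is $O(\log \nodesize / \log\log \nodesize)$ w.h.p.\ by the first reduced-height-skip-list lemma.

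For augmented value updates, I would maintain the invariant that each skip list node stores $g$ applied to the bottom-level values in the contiguous range it represents. A change to a single bottom-level value only affects aggregates at the ancestors reachable by parent pointers; this chain has length bounded by the skip list height, i.e.\ $O(\log \nodesize / \log\log \nodesize)$ w.h.p. Paying $O(w)$ per ancestor to apply the change to the stored aggregate yields the claimed $O(w \cdot \log \nodesize / \log\log \nodesize)$ bound. For links and cuts, both operations pivot on a position in the skip list that must be located by a top-down search; the number of nodes visited is bounded by the search-path length, which is $O(\log^2 \nodesize)$ w.h.p. Pointer rewiring at each visited node is $O(1)$, but any invalidated aggregate must be recomputed from its (reconfigured) children at cost $O(y)$ per node. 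Summing along the search path yields $O(y \cdot \log^2 \nodesize)$ w.h.p.

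The main obstacle will be justifying that the per-ancestor cost of an augmented value update is $O(w)$ rather than scaling with the non-constant expected fanout $\Theta(\log \nodesize)$ induced by the promotion probability $p = 1 - 1/\log \nodesize$. The approach is to adapt the augmented-skip-list scheme of Tseng et al.~\cite{tseng2019batch}, which stores partial aggregates so that each leaf-level change induces exactly one $O(w)$ update per level on the parent-pointer path; crucially, the correctness of that scheme does not depend on the specific value of $p$, so it transfers to the reduced height setting unchanged. In the concrete application to cutsets, where $g$ is sketch addition and is invertible under XOR, the $O(w)$ per-ancestor update is immediate via a delta computation, which both confirms the bound and shows it is realized for the intended use case.
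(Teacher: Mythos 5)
Your proposal is correct and follows essentially the same route as the paper: root finding and augmented-value updates are charged to the $O(\log \nodesize/\log\log \nodesize)$ height via parent pointers, links and cuts are charged to the $O(\log^2 \nodesize)$ search-path length, and the per-node $O(w)$ (resp.\ $O(y)$) cost is delegated to the augmentation scheme of Tseng et al., exactly as the paper does. Your extra discussion of why the $\Theta(\log \nodesize)$ fanout does not inflate the per-level update cost is a useful clarification that the paper leaves implicit in its citation.
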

\begin{proof}
    Updating an augmented value only requires modifying the rightmost node on each level in a single search path (see~\cite{tseng2019batch} for details), thus the cost is proportional to the height multiplied by the cost of updating a single augmented value. Since the height of a reduced height skip-list is $O(\log \nodesize / \log\log \nodesize)$ w.h.p., this takes $O(w \cdot \log \nodesize / \log\log \nodesize)$ time w.h.p. where $w$ is the time to update one augmented value.
    The cost for joins and splits is proportional to the length of a search path multiplied by the cost $y$ of a single computation of the aggregate function over augmented values (see~\cite{tseng2019batch} for details). Since the length of a search path in a reduced height skip-list is $O(\log^2 \nodesize)$ w.h.p., joins and splits take $O(y \cdot \log^2 \nodesize)$ time w.h.p.
    Maintaining parent pointers during joins and splits on the reduced height skip-list takes time proportional to the length of a search path which is $O(\log^2 \nodesize)$ w.h.p. because we can update the pointers as we traverse the path.
    Using reduced height skip-lists to implement Euler tour trees completes the proof.
\end{proof}

\section{Experimental Setup Details}\label{app:experiments}

\subsection{Machine Details and Datasets}
\label{app:datasets}

We implemented \sysname as a C++17 executable compiled with Open MPI version 5.0.2 and GCC version 13.2.0. All experiments were run on a 48-core AMD EPYC 7643 CPU with hyperthreading disabled and 256GB of RAM. This corresponds to the 96-core queues of a cluster, which consist of 2-socket machines with one of these CPUs in each socket. We use slurm to restrict all runs to a single socket. 

\myparagraph{Datasets}
Table~\ref{tab:datasets} lists all the datasets we used for our experiments. These are a mix of synthetic Kronecker datasets from Tench \etal~\cite{tench2024graphzeppelin}, random graphs from Federov \etal~\cite{fedorov2021scalable}, and real-world graphs from Federov \etal and Chen \etal~\cite{chen2022dynamic}.
These inputs represent a mix of sparse (real-world and some random) and dense (kronecker and some random) graphs, with varied structural properties such as diameter and number of components. 

\myparagraph{Converting Static Graphs to Streams}
We convert the static graph data 
(the real-world and random graph datasets) 
into a stream of updates in two ways and therefore generate two different datasets. We generate the \defn{standard stream} of updates by inserting all edges and then deleting all edges.
We generate a \defn{fixed forest stream} of updates by first computing a spanning forest of the static graph and creating edge insertion operations for each edge in the spanning forest. Then, with the remaining edges that were not part of the spanning forest, we fully insert them all and then fully delete them all in a random order 20 times (just once for already large streams).



\myparagraph{Adding Queries to Streams} We augment the sequences of edge insertions and deletions from all datasets with connectivity queries where $u$ and $v$ are chosen independently and uniformly at random from $\nodes$.  Specifically, we add a burst of queries in between periods of updates with length $\queryperiod$ selected uniformly at random between $1000$ and $2000$ updates. The number of queries in each burst is $\queryperiod/9$, so the number queries is $10\%$ of the total operations in the stream. 


\myparagraph{Measuring the Impact of Stream Properties}
The two main properties of the input streams that impact performance are density and proportion of \treeedge updates. These two are often correlated; dense graphs with random insertions and deletions will have a much higher proportion of \nontreeedge updates than sparse graphs.
We design experiments that decorrelate density and \treeedge update frequency as much as possible.
We accomplish this with the standard stream and the fixed-forest stream variants (see earlier in this section).
The standard streams have a high proportion of \treeedge updates, and the fixed forest streams have a much higher proportion of \nontreeedge updates.
%
Testing these systems on both stream types allows us to indirectly measure the performance difference between \treeedge and \nontreeedge updates.

\subsection{Baseline Systems}\label{app:baselines}
We compare our dynamic connectivity system with three other recently published works that tackle the same problem. These systems are \dtree~\cite{chen2022dynamic}, \idtree~\cite{xu2024constant}, and \cdcname~\cite{fedorov2021scalable}.
%
\dtree and \cdcname both maintain a lossless representation of all edges in the graph and use different strategies to search these edges for a replacement whenever a spanning forest deletion occurs.
To compare against another sketch-based dynamic connectivity baseline, we also developed an implementation of \gibb. Prior to this work no implementations of a sketch-based dynamic connectivity algorithms existed, to the best of our knowledge.

\myparagraph{\dtree}
The \dtree algorithm~\cite{chen2022dynamic} maintains a low-diameter spanning forest of the graph. To find a replacement edge they simply perform BFS on the smaller component of the split caused by a tree edge deletion. During other operations they adjust the spanning forest such that the size of any smaller component formed by a future tree edge deletion will be low. This heuristic has no theoretical guarantees but has good performance in practice.
Our experiments use the original Python implementation of \dtree~\cite{chen2022dynamic}.

\myparagraph{\idtree}
\idtree~\cite{xu2024constant} improves upon \dtree with several optimizations, including a selectively rebuilt union-find/disjoint-set data structure. Additionally, the authors re-implement and optimize \dtree in C++. The authors provide several versions of their code with varying amounts of optimizations enabled to perform regression testing. We use the \idtree version of their code. 

\myparagraph{\cdcname}
\cdcname~\cite{fedorov2021scalable} is an implementation of a concurrent generalization of the dynamic connectivity algorithm of Holm \etal~\cite{holm2001poly}.
\cdcname implements a non-blocking single-writer Euler tour tree as a subroutine for updating trees edges in the Holm \etal algorithm. When a spanning forest edge is deleted, it applies a fine-grained lock to the component that the edge was part of, and searches its edge store for a replacement. Non-tree edge updates are processed concurrently and lock-free, and a non-tree edge addition which is being processed concurrently with a tree edge deletion may also be selected as its replacement.

The authors of \cdcname~\cite{fedorov2021scalable} provide implementations of several variants of their algorithm, all developed in Kotlin for JVM.
In our experiments we use the default implementation of their algorithm (referred to as "9: our algorithm" in their experimental section). We chose this version because it is consistently among the top performers in their experiments. This implementation uses the version of their algorithm with fine-grained locking, non-blocking non-tree edge updates, and non-blocking connectivity queries.
Since updates and queries are processed concurrently by their system, we are unable to report separate measurements for updates per second and queries per second, so we report operations per second where operations are updates or queries.

\myparagraph{\Gibb Implementation}
Our implementation of \Gibb~\cite{gibb2015dynamic} uses Euler tour trees implemented with skip lists. The only difference from their algorithm is we use \cameosketch~\cite{landscape} rather than the sketch of Cormode~\etal~\cite{cormode2014unifying}.

\myparagraph{Handling Out-Of-Memory (OOM) Errors}
As \dtree, \cdcname, and \idtree are not designed for dense graphs, they run out of main memory space on our machine for certain large inputs. As such, we don't report numbers for these instances.

\myparagraph{Measuring Query Performance}
For \sysname we include the time spent to flush the buffer of updates from the \speculative buffering technique in the total query time.
For \cdcname we can not directly measure query performance because queries happen concurrently with updates.

\section{Additional Experimental Results}\label{app:results}

\subsection{Isolated Updates Are Rare}\label{app:isolated} 
We ran \gibb on several datasets to get a sense of how common isolated updates are. On the (sparse) Twitter graph, $7\%$ of updates are isolated, while for the (dense) Kron-16 graph , $0.006\%$ of updates are isolated. 
Recall that each isolated update may produce $O(\log \nodesize)$ isolated components. In \gibb, processing each isolated component requires $O(\log^3 \nodesize)$ time. We found that a vast majority of isolated updates produce a small number (2 or less) of isolated components. Therefore the vast majority of isolated updates don't require the full $O(\log^4 \nodesize)$ time.

\subsection{\Speculative Buffering Tradeoff}\label{app:update_buffer_experiment}
The \speculative update buffering strategy described in Section~\ref{sec:update_buffer} has the potential to increase ingestion throughput when there are many normal updates. However, this increase may come at the cost of additional work when there are many isolated updates (because these require reverting earlier work) and high query latencies (because partially-full buffers must be flushed before the query can be computed). We tested this tradeoff by running \sysname on both the standard and fixed-forest versions of the Kron-16 and Twitter datasets, varying the size of the update buffers between 1 update and 1000 updates. This allows us to measure the buffer size effect on dense graphs (Kron-16), sparse graphs (Twitter), streams with many isolated updates (standard streams), and streams with few isolated updates (fixed-forest streams).


\begin{figure}
    \centering
    \includegraphics[width=\linewidth, trim={0cm, 1cm, 0cm, 0cm}]{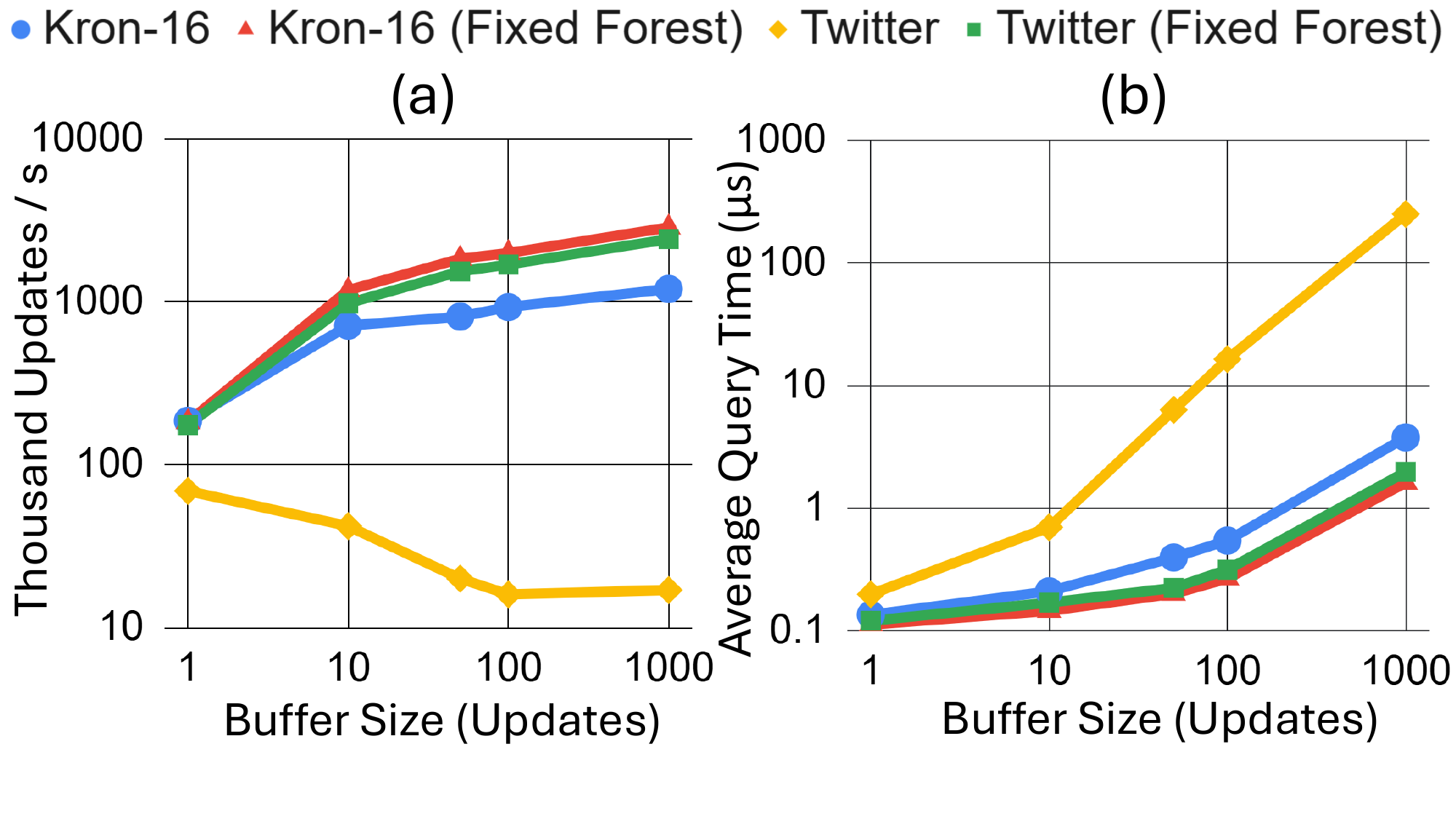}
    \caption{\small
    For graphs with few \treeedge updates (i.e., dense or fixed-forest graphs), increasing the speculative buffer size greatly improves ingestion throughput at the cost of query latency. For graphs with many \treeedge updates, larger speculative buffers degrade both ingestion throughput and query latency.
    }
    \label{fig:buffer}
\end{figure}

The results of this experiment are summarized in Figure~\ref{fig:buffer}. We see that increasing buffer size leads to increased update throughput on all datasets except for standard Twitter, which is sparse and has so many isolated updates that larger buffer sizes practically guarantee that most sketch update work will need to be reverted, perhaps even multiple times. For other datasets, isolated updates are not frequent enough to significantly lower the ingestion rate.

Unsurprisingly, increasing buffer size increases query latency for all datasets because larger buffers require more work to flush before a query can be answered. Again, this performance decrease is most striking for the standard Twitter graph, because there are likely to be many isolated updates to process before a query can be answered, and this leads to the same increases in processing cost that we observe in ingestion throughput. For all other datasets, while the query latency increases with buffer size, the magnitude of the increase is acceptable for most applications (because the baseline query latency is so low).

\end{document}